\newtheorem{thm}{Theorem}[section]
\newtheorem{lemma1}{Theorem}[section]
\newtheorem{lemma}[lemma1]{Lemma}
\begin{document}

\title{Campaigning in Heterogeneous Social Networks: Optimal Control of SI Information Epidemics}

\author{Kundan Kandhway and Joy Kuri
\thanks{Authors are with the Department of Electronic Systems Engineering, Indian Institute of Science, Bangalore 560012, India; e-mail: \{kundan, kuri\} @dese.iisc.ernet.in}
}


\maketitle

\thispagestyle{fancy}

\begin{abstract}
We study the optimal control problem of maximizing the spread of an information epidemic on a social network. Information propagation is modeled as a Susceptible-Infected (SI) process and the campaign budget is fixed. Direct recruitment and word-of-mouth incentives are the two strategies to accelerate information spreading (controls). We allow for multiple controls depending on the degree of the nodes/individuals. The solution optimally allocates the scarce resource over the campaign duration and the degree class groups. We study the impact of the degree distribution of the network on the controls and present results for Erd\H os-R\'enyi and scale free networks. Results show that more resource is allocated to high degree nodes in the case of scale free networks but medium degree nodes in the case of Erd\H os-R\'enyi networks. We study the effects of various model parameters on the optimal strategy and quantify the improvement offered by the optimal strategy over the static and bang-bang control strategies. The effect of the time varying spreading rate on the controls is explored as the interest level of the population in the subject of the campaign may change over time. We show the existence of a solution to the formulated optimal control problem, which has non-linear isoperimetric constraints, using novel techniques that is general and can be used in other similar optimal control problems. This work may be of interest to political, social awareness, or crowdfunding campaigners and product marketing managers, and with some modifications may be used for mitigating biological epidemics.

\end{abstract}

\begin{IEEEkeywords}
Erd\H os-R\'enyi Networks, Information Epidemics, Non-linear Programming, Optimal Control, Scale Free Networks, Social Networks,  Susceptible-Infected (SI).
\end{IEEEkeywords}

\IEEEpeerreviewmaketitle

\section{Introduction}

\IEEEPARstart{P}{olitical} campaigners, crowdfunders and product marketing managers are using social networks increasingly to influence individuals. The size of the online social networks---in addition to the (old) human network where two individuals interacting in day-to-day life are connected via a link---is increasing day by day, which gives campaigners an opportunity to mold the opinion of many individuals. Information (\emph{e.g.} awareness of newly launched or upcoming products or services like smartphones, video games, satellite TV plans, movies etc; ideologies of political candidates) spreads through networks in a manner similar to pathogens on human networks, leading to ``information epidemics''. The campaigners' goal is to `infect' as many individuals as possible with the message by the campaign deadline, respecting her budget constraints. Money and manpower are scarce resources that need to be utilized judiciously over the duration of the campaign, to achieve optimal results. Such a resource allocation problem leads to the `optimal control' formulation---maximize an objective functional by adjusting a control which affects the system evolution.

We address the above resource allocation problem to maximize the fraction of nodes in a social network that are aware of the information. Individuals communicate the information to their neighbors (through Twitter tweets, Facebook posts, exchange of ideas when two individuals interact in face-to-face meetings etc.), giving rise to an information epidemic. The campaigner can influence this information spreading in two ways: (i) By directly recruiting individuals from the population at some cost by adjusting a \emph{direct recruitment control} (advertisements in mass media announcing discounts on products). (ii) By accelerating message spreading by incentivizing individuals who already have the message to spread more, by adjusting a \emph{word-of-mouth control, e.g.} announcing referral rewards in the form of discounts, coupons or cash-backs for introducing a friend to a product or service. Word-of-mouth incentives may be announced by emailing current customers, which will encourage them to put in a good word for the company. 

Resource limitations prevent the campaigner from communicating the information to the whole population. Not only is the timing of the direct recruitment and word-of-mouth incentives a crucial factor in determining the extent of information spreading, but also the resource distribution among different types of individuals and strategies. For example, individuals with large numbers of links are known to be influential spreaders \cite{goldenberg2009role}, and more resources may be spent on them.

\emph{Justification for using the Susceptible-Infected (SI) model:} We choose to model the information epidemic as the SI process (where a fixed fraction, $\alpha<1$, of infected individuals spread the information in the absence of any word-of-mouth control), over other possibilities such as the Susceptible-Infected-Susceptible (SIS) and Susceptible-Infected-Recovered (SIR) processes. The SI process has been used in previous studies as a preferred model for information propagation 
(\emph{e.g.} \cite{karsai2011small}). It is suitable to model situations where people who have ever received a message do not forget it during the campaign period and depending upon the level of word-of-mouth incentive, spread the message via their links. Such a situation may be encountered in marketing of a durable product (\emph{e.g.} latest version of smartphones or computer games) or services requiring memberships (\emph{e.g.} mobile-phone and satellite TV services) where the campaign duration is of the same order as the life of the product. In these cases, the contact details of the customers are always available with the company, which can incentivize their customers with varying levels of word-of-mouth control to turn them into active spreaders. Thus, the word-of-mouth incentive changes the fraction of active spreaders among the infected.

Infected individuals \emph{recover} in the SIR process or fall back to the susceptible state in the SIS process. SIS processes may be suitable to model marketing of consumables. SIR processes may be suitable for situations where individuals will stop spreading the message after random amounts of time and the campaign duration is large compared to the `recovery' time (\emph{e.g.} a situation arising in multi-level marketing, where an individual, although still a member, has become inactive).

The framework developed in this paper can be used in situations where only direct recruitment control can be applied, as well as those in which both direct recruitment and word-of-mouth controls can be applied. The former is suitable for modeling political campaigns, campaigns for spreading awareness of a social cause, movie promotion etc. The latter model is suitable for the product marketing as well as campaigns for crowdfunding. Crowdfunding refers to micro-funding in forms of loans, donations, equity purchases or pre-ordering of products yet to be produced, through the Internet and online social networks such as Facebook, Twitter, LinkedIn etc. \cite{belleflamme2013crowdfunding}. 

\subsection{Related Work and Our Contributions}

Devising strategies for epidemic prevention has been a subject of interest in many studies (\emph{e.g.} \cite{shaw2010enhanced, dykman2008disease}). 
Our work is different in that the \emph{theory of optimal control} is used. The literature on optimal control of disease and computer virus epidemics, in a homogeneously mixed population, is plentiful \citep{asano2008optimal, gaff2009optimal, yan2008optimal, zhu2012optimal}. 
\emph{Our work is different from the previous literature on optimal control of epidemics in two aspects:} \emph{First}, we consider information epidemics, which are \emph{required to be disseminated}; this seems to have attracted less attention compared to biological and computer virus epidemics, which need to be curbed. \emph{Secondly and more importantly}, we consider a population of \emph{networked individuals}. We note that networks play an important role in information dissemination, as individuals rarely interact randomly with others in society (as homogeneous mixing assumes); and even if they do, they seldom trust those unknown to them. Most people interact and trust a small group of individuals (`small' compared to total population size) who are their `neighbors' and share a link with them in the network.

Many of the papers above consider a homogeneously mixed population, and the ones which do not \cite{dayama2012optimal}, present results for the case where individuals are divided into a maximum of two degree classes only. In contrast, we consider almost three hundred degree classes in this study. Unlike the previous literature, this allows us to \emph{study the effect of network topology (Power Law, Erd\H os-R\'enyi) on control of information dissemination in networks.} Moreover in contrast to \cite{dayama2012optimal}, we formulate the optimal control problem with a fixed budget constraint. 

A recent study \cite{youssef2013mitigation} has formulated the optimal epidemic prevention problem for the network case, where mean field equations are written for each node of the network instead of each degree class, the approach taken in this study. However, \cite{youssef2013mitigation} computed the optimal solution only for a five node network and proposed heuristics for larger networks. 

The works on rumor spreading on technological and social networks with known topologies \citep{chierichetti2009rumor,pittel1987spreading} are similar in spirit to the problem considered here. However, the aim there was to compute maximum number of communication rounds required to disseminate a piece of information to almost all the nodes, following fixed communication strategies. Nodes may either `Pull' the message from their neighbors, `Push' the information to them or do both. Finding the optimal message spreading strategy was not considered in those works.

The authors in \cite{karnik2012optimal,kandhway2014run,kandhway2014optimal} and \cite{khouzani2011optimal} maximized information and security patch dissemination through optimal control on social and computer networks respectively, but considered homogeneously mixed population. Similarly, the author in \cite{belen2008behaviour} used impulse control to maximize the information spread in a homogeneously mixed population, assuming the Daley-Kendall and Maki-Thompson models, which are different from the Susceptible-Infected model assumed in this paper. The authors in \cite{sethi2008optimal} devised optimal pricing and advertisement strategies for new products without considering the epidemic message propagation, as is the case in this paper.

\emph{A unique aspect of an information epidemic} is that the interest level of people in talking about the subject of the campaign may undergo a gradual change over the campaign duration (\emph{e.g.,} decreasing interest for a model of a smartphone or a version of a software or computer game as they grow old; or increasing interest level of people to talk about the upcoming elections). Thus, the spreading rate of an information epidemic may vary during the campaign duration. This differentiates it from biological epidemics, where the spreading rate is constant. \emph{We have captured this phenomenon in this work, which differentiates it from the previous literature on disease and information epidemics.} For some applications like crowdfunding, the spreading rate is not expected to change over time.

Another contribution of this work is to \emph{show the existence of a solution to optimal control problems with a budget constraint under non-linear costs of applying controls.} Previous studies on biological epidemics on homogeneously mixed populations (\emph{e.g.} \cite{asano2008optimal,gaff2009optimal}) have considered non-linear cost functions (on controls); 
however they considered a linear combination of the cost and the reward reaped by the application of controls to the system, and minimized the 
\emph{net cost}. In contrast, we have explicit budget constraints. \cite{karnik2012optimal} does not have a network structure, although the authors have formulated a problem with an explicit budget constraint. \cite{dayama2012optimal} has a network structure, but does not have explicit budget constraints. 

Notice that the cost incurred by application of controls is often non-linear in controls in practical applications, as argued in \cite{gaff2009optimal}. 
This motivates us to formulate the problem with a non-linear cost structure.
Of course, this is also a generalization of the linear cost structure that has been assumed widely in earlier literature. Once we have an explicit budget with non-linear costs, the standard theorems (by Filippov and Cesari \cite{fleming1975deterministic}) for showing the existence of a solution are not applicable (because the system is now non-linearly influenced by the control) and hence new techniques for showing the existence of a solution are required.

Apart from considering a network structure and an explicit budget constraint for the first time in this paper, \emph{we have tailored the classical Susceptible-Infected model to capture information spreading.} In particular, unlike a biological epidemic, information spreading is voluntary. We have captured this by introducing a parameter $\alpha$ to represent the fraction of the infected population that chooses to be spreaders. Word-of-mouth control encourages slightly unwilling people to act as spreaders, increasing the effective $\alpha$. A similar control was considered in \cite{khouzani2011optimal} in a four state model for a homogeneously mixed population.

Our results show considerable improvement in the final fraction of the infected population (people who are aware of the topic), when compared to a static and  bang-bang control strategies, that respect the same budget constraints. The improvements, as expected, are very substantial compared to the no control (campaign) strategy.

\section{The System Model and Problem Formulation}
\label{sec:sys_model}

Individuals in the target population are organized into a static network (or a graph). The degree of a node (individual) is the number of neighbors the node in the network is connected to. The nodes are grouped into different classes. All the nodes with degree $k$ are said to be in the degree class $k$. The set of all possible degree classes (or degrees) is denoted by $\mathbb K$. Thus, $k\in \mathbb K=\{K_{min},...,K_{max}\}$, $K_{min}$ and $K_{max}$ denote the minimum and maximum degrees of the nodes in the network. There are a total of $|\mathbb{K}|$ degree classes where $|\mathbb{K}|$ denotes the cardinality of the set $\mathbb{K}$. Let the size of the network be $N$ and the number of nodes in the degree class $k\in \mathbb K$ be $N_k$, so that $\sum_{k\in \mathbb K}N_k = N$. Our formulation works for networks with an arbitrary \emph{degree distribution}, $p_k$. The degree distribution of a network is defined as the probability mass function, $p_k$ of the degrees $k$ of the nodes in the network, $\sum_{k\in \mathbb K}p_k = 1$. For a network, the empirical degree distribution is, $p_k=N_k/N,~k\in \mathbb K$ \cite{barrat2008dynamical}.

\emph{\textbf{Uncontrolled System:}} We model the Susceptible-Infected information epidemic by the `degree based compartmental model'. These models are most accurate on configuration model networks. These networks lack correlation in the way nodes are connected to one another, in the sense that a half edge of a given node is equally likely to be connected to any other half edge in the network. Also, being a mean field model, degree based compartmental model requires population size $N$ to be large.

The campaigner is interested in spreading a message in the population connected via the network. The process starts at time $t=0$ and the campaign deadline is $t=T$. The nodes in the network are classified into two categories---susceptible and infected. A susceptible node is yet to receive the message and an infected node already has it. Infected nodes are further classified into active (spreaders) and non-active nodes. Only $\alpha < 1$ fraction of the infected population is interested in transmitting the message further (active or spreaders). Thus, a newly infected node chooses to become a spreader with a probability $\alpha$. Effective fraction of spreaders in the infected population can be changed by the level of the word-of-mouth incentive (discussed later).

Let the numbers of susceptible and infected nodes at time $0\leq t\leq T$, in the degree class $k\in \mathbb K$ be $S_k(t)$ and $I_k(t)$ respectively; and, $s_k(t)=S_k(t)/N_k$, $i_k(t) = I_k(t)/N_k$. Then, the fractions of susceptible and infected nodes in the network at time $t$ is given by $s(t)=\sum_{k\in \mathbb K}p_ks_k(t)$ and $i(t)=\sum_{k\in \mathbb K}p_ki_k(t)$ respectively with $s(t)+i(t)=1$ and $s_k(t)+i_k(t)=1,~\forall~k\in \mathbb K$. 

The population consists mostly of the susceptibles at $t=0$, except for a small fraction $i_0$, of nodes which act as the seed for information dissemination. We assume $i_k(0)=i_0,~\forall k\in \mathbb K$. The message is passed probabilistically due to susceptible--spreader contact. At time $t$, a susceptible node acquires the message from a spreader at a rate $\beta(t),~t\in[0,T]$. In other words, in a small interval $dt$ at time $t$, a `susceptible' node changes its state to `infected' due to a single susceptible--spreader link with a probability $\beta(t) dt$.

The degree based compartmental model makes an approximation that all the nodes in the degree class $k$ behave in exactly the same way \citep[Sec. 17.10.2]{newman2010networks} \citep[Sec. 9.2]{barrat2008dynamical}. We first evaluate the probability that a tagged susceptible node of degree $k$ will change to infected state in a small interval $dt$ at time $t$. For the configuration model, the neighbor degree distribution, which is the probability that we will find a neighbor of degree $k$ if we follow an edge of any node in the network, is $r_k=kp_k/\bar k$ (details are in Appendix \ref{appendix:neighbor_degree_distribution}). Here $\bar k=\sum_{k\in \mathbb K}kp_k$ is average degree of the network. Note that $r_k$ is not simply $p_k$, one cannot reach a node with degree zero by following an edge even if $p_0\neq 0$. Since the node in question is susceptible and the neighbor is infected (and active), so the neighbor must have acquired the information from somewhere else. So the quantity of interest here is `excess degree distribution', which discounts the edge (of the neighbor) due to the tagged susceptible node we are situated at. Excess degree distribution is given by $q_k = r_{k+1} = (k+1)p_{k+1}/\bar k$. Note that, $q_{K_{max}}=0$.

The mean number of active neighbors around the tagged susceptible node of degree $k$, who can potentially pass the information to it, is $k\sum_{l\in \mathbb K}(\alpha i_l(t)q_l)$. Thus the tagged susceptible node will switch to infected state with a probability\footnote{It is 1 minus the probability that none of the neighbors pass the information. Information flow from the neighbors is assumed independent.} $1-(1-\beta(t)dt)^{k\sum_{l\in \mathbb K}(\alpha i_l(t)q_l)} \simeq \beta(t) k \sum_{l\in \mathbb K}(\alpha i_l(t)q_l) dt$. Since the fraction of susceptible nodes in degree class $k$ at time $t$ is $s_k(t)$, the total increase in the fraction of infected nodes in degree class $k$ in interval $dt$ is given by $s_k(t) \beta(t) k \sum_{l\in \mathbb K}(\alpha i_l(t)q_l) dt$ \citep[Sec. 17.10.2, adapted for time varying $\beta(t)$]{newman2010networks}. Thus, the rate of change of infected nodes in degree class $k$ in the uncontrolled SI epidemic is:
\begin{eqnarray}
\dot i_k(t) & = & \beta(t)k s_k(t) \sum_{l\in \mathbb K}(\alpha i_l(t)q_l);~~k\in \mathbb K. \nonumber
\end{eqnarray}

\emph{\textbf{Admissible Controls:}} To control the above system, we use the direct recruitment and word-of-mouth control signals which are defined next. We denote the $M$-dimensional direct recruitment vector control function by $\boldsymbol u=(u_1,...,u_M)$. The value $u_m(t),~1\leq m \leq M$, denotes the rate of direct recruitment, at time $t$, carried out by the campaigner in the group $m$ of degree classes, denoted by $\mathbb K_m=\{\tilde k_{m-1}+1,...,\tilde k_m\}$, $K_{min}-1=\tilde k_0<\tilde k_1 < ... < \tilde k_M = K_{max}$. Through direct recruitment control, the campaigner taps the pool of available susceptible nodes and converts them into infected ($\alpha$ fraction of those who get infected are spreaders). This can be done, for example, by placing advertisements in the mass media.

The purpose of having $M$ control functions and dividing degree classes into $M$ groups is to control each group with one control, thereby identifying groups which are important to target at any given time. Fig. \ref{fig:Km_explanation_1} shows a small network with $5$ nodes: $a$ to $e$, with degree sequence: $2,3,4,2,1$. Thus $\mathbb K = \{1,2,3,4\}$ and empirical degree distribution: $p_1=p_3=p_4=1/5,p_2=2/5$. If we decide to have $M=2$, we may choose $\mathbb K_1=\{1,2\}, \mathbb K_2=\{3,4\}$. Thus nodes $a,d,e$ will be controlled with $u_1$ and $b,c$ with $u_2$. Fig. \ref{fig:Km_explanation_2} shows an example of grouping a network with $|\mathbb K|=295$ degree classes into $M=3$ groups. We emphasize that even though degree classes are grouped into fewer groups for the purpose of controlling them, we will still capture information diffusion dynamics at a finer level with differential equations for each of the $|\mathbb K|$ classes.

\vspace{-1em}
\begin{figure}[ht!]
\centering
\subfloat[\label{fig:Km_explanation_1}]{
\includegraphics[width=40mm]{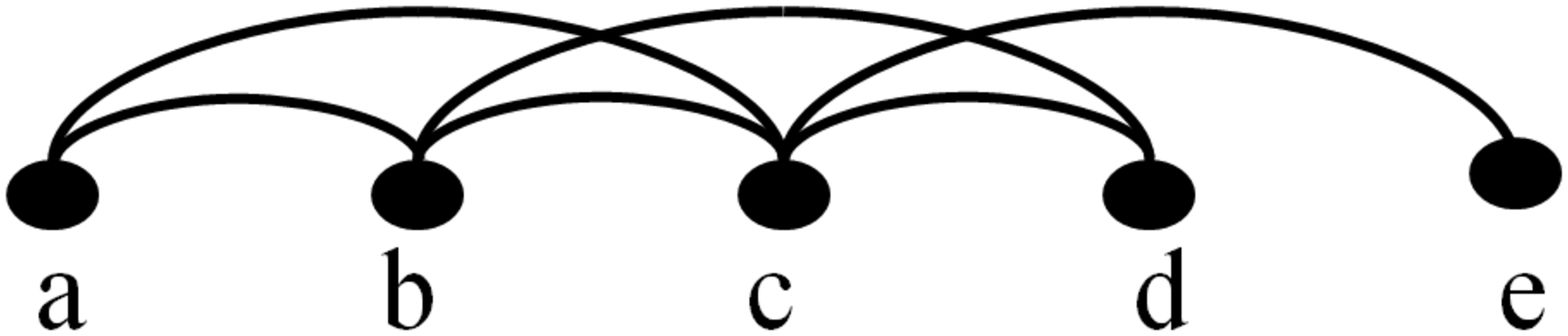} }
\hfill
\subfloat[\label{fig:Km_explanation_2}]{
\includegraphics[width=40mm]{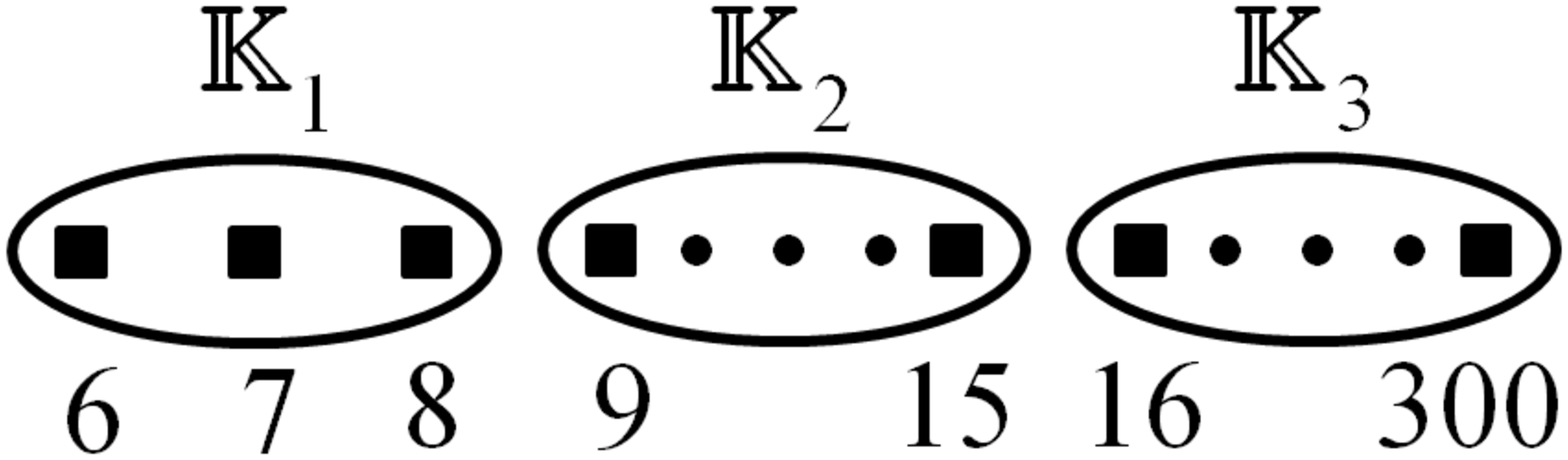} }
\caption{\small{(a) An example graph. Each filled oval represents a node. (b) An example of dividing a network with minimum degree $K_{min}=6$ and maximum degree $K_{max}=300$ into $M=3$ three degree class groups. Each square represents a degree class.}}
\label{fig:Km_explanation}
\end{figure}

The set of all admissible direct controls is defined in the following. Let $\Psi$ be the set of all equicontinuous functions over $[0,T]$, \emph{i.e.}, $|\sigma(t)-\sigma(\hat t)| \leq C_\Psi(\epsilon)$, for $t,\hat t \in [0,T]$; $|t - \hat t| \leq \epsilon$, $\forall \sigma \in \Psi$ with $C_\Psi(\epsilon)\rightarrow 0$ as $\epsilon \rightarrow 0$ \cite[Sec. 1.6]{atkinson2009theoretical}. Then,
\begin{align}
& u_m\in U_m \triangleq \{\sigma\in \Psi:0\leq \sigma(t)\leq u_{max},~\forall t\in[0,T] \}, \label{eq:U_m} \\
& \textnormal{and, } \boldsymbol u = (u_1,...,u_M) \in \boldsymbol U \triangleq \overset{M}{\underset{m=1}{\times}} U_m. \label{eq:U}
\end{align}
Practical considerations will require each of the $u_m(t)$ to be continuous and bounded (maximum allowed direct control is $u_{max}$) at all times $t$. We make a slightly stricter assumption that the functions in $\Psi$ are equicontinuous (\emph{i.e.} the same $C_\Psi(\epsilon)$ works for all the functions in $\Psi$ and it is independent of $t$) as it aids in showing existence of a solution to the optimal control problem (\ref{eq:opt_prob}) (will be discussed later). This assumption is not too strict and is milder than, for example, assuming differentiability of functions.

The word-of-mouth control affects the fraction of spreaders in the infected population and is denoted by $\boldsymbol v=(v_1,...,v_M)$. The value $v_m(t),~1\leq m \leq M$, denotes the rate at which word-of-mouth incentives are handed out, at time $t$, in the $m$th group of degree classes, $\mathbb K_m$, defined above. The set of all admissible word-of-mouth controls is defined as:
\begin{align}
& \boldsymbol v=(v_1,...,v_M) \in \boldsymbol V \triangleq \overset{M}{\underset{m=1}{\times}} V_m \label{eq:V} \\
& \textnormal{where, } v_m \in V_m \triangleq \{\sigma\in \Psi:0\leq \sigma(t)\leq v_{max},~\forall t\in[0,T]\}. \label{eq:V_m}
\end{align}
Here, $v_{max}$ is the maximum allowed word-of-mouth control signal and the set $\Psi$ was defined above.

\emph{\textbf{Controlled System:}} The objective (reward) functional for the optimal control problem is chosen to be, $J=\sum_{k\in \mathbb K} p_ki_k(T)$. For applications like political/social awareness/crowdfunding campaigns and product marketing, considered in this paper; we want to maximize the final number (fraction) of individuals who have received the message by the end of the campaign, \emph{i.e.} at $t=T$.  We do not care about the evolution history, $i_k(t),0<t<T$; this motivates such a choice for $J$. In addition, we assume that the resources at our disposal for applying the control signals are fixed. The budget constraint is captured in Eq. (\ref{eq:opt_prob_B_d}), where $B$ denotes the combined budget for both types of controls.

\emph{Rationale for the budget constraint (\ref{eq:opt_prob_B_d}):} The functions $b_m(.)$ and $c_m(.)$ capture the instantaneous cost of applying the direct recruitment and the word-of-mouth controls in the group $m$ and are continuous, non-negative and increasing functions in their arguments for $1\leq m\leq M$ (more effort incurs more cost). Also, $g_m=\sum_{k\in \mathbb K_m} p_k$ denotes the fraction of the whole population in group $m$. The instantaneous resource consumption for the direct control strategy is $\sum_{m=1}^Mg_mb_m(u_m(t))$. 

For the word-of-mouth control, the resource is spent when an active node demonstrates that it has successfully influenced a susceptible. Denote by $\bar s(t)$ the average fraction of susceptible nodes around a given node. Then $\bar s(t)=\sum_{k\in \mathbb K} r_ks_k(t)$, where $r_k=kp_k/\bar k$ is the neighbor degree distribution defined earlier. Given that we are in group $m$, the probability of picking up an active node of degree $k$ and a susceptible neighbor is $(p_k/g_m)\alpha i_k(t)\bar s(t)$. In a small interval $dt$, the susceptible neighbor will convert to infected with probability $(p_k/g_m)i_k(t)\bar s(t)\alpha v_m(t)\beta(t)dt$ due to application of word-of-mouth control. So the average word-of-mouth resource consumed for $k$ neighbors is $k c_m(v_m(t)) (p_k/g_m)i_k(t)\bar s(t)\alpha v_m(t)\beta(t)dt$. Aggregating the resource consumption over all degree classes in group $m$ and then aggregating over all groups we get the instantaneous rate of resource consumption for word-of-mouth strategy as
\begin{equation}
\sum_{m=1}^M \cancel{g_m} c_m(v_m(t)) \bar s(t)\alpha v_m(t)\beta(t) \sum_{k\in\mathbb K_m} \left\{\frac{p_k}{\cancel{g_m}}ki_k(t)\right\}. \label{eq:instantaneous_resource_rate_wom}
\end{equation}
Define $\bar i_m(t)\triangleq\sum_{k \in \mathbb K_m} ki_k(t)p_k$. Thus we get Eq. (\ref{eq:opt_prob_B_d}) by aggregating the resource expenditure over the complete campaign horizon. The constraint Eq. (\ref{eq:opt_prob_B_d}) is written from the perspective of a single (typical) node in the network; so, $B$ must be interpreted as a per-node expected budget.

The optimal control problem can now be stated as:
\small
\begin{subequations}
\label{eq:opt_prob}
\begin{align}
\underset{\boldsymbol u\in \boldsymbol U,\boldsymbol v\in \boldsymbol V}{\textnormal{max}} J & = \sum_{k\in \mathbb K} p_k i_k(T), \label{eq:cost_funtional}\\
\textnormal{s.t.:~~} \dot i_k(t) & =  \beta(t)s_k(t)k \left(\sum_{p=1}^M\sum_{l\in \mathbb K_p} q_{l} i_l(t)\underbrace{\alpha(1+v_p(t))}_{\textrm{word-of-mouth}}\right) \nonumber \\
&  + \underbrace{u_m(t)s_k(t)}_{\textnormal{direct recruitment}};~\forall k\in \mathbb K_m,1\leq m \leq M, \label{eq:opt_prob_states} \\
i_k(0) & =  i_0; ~~k\in \mathbb K, \label{eq:opt_prob_init_cond} \\
& \hspace{-3em}  \int_0^T \sum_{m=1}^M \Big\{g_mb_m(u_m(t))~+ \nonumber \\ 
& \alpha v_m(t)\beta(t) c_m(v_m(t)) \bar i_m(t) \bar s(t) \Big\} dt = B. \label{eq:opt_prob_B_d}
\end{align}
\end{subequations}
\normalsize
As stated earlier, $i_0$ is the seed for the epidemic. Since, $s_k(t)=1-i_k(t),~0\leq t\leq T,~k\in \mathbb K$, so we have $|\mathbb K|=\sum_{m=1}^M|\mathbb K_m|$ state variables (and not $2|\mathbb K|$). Also, $\alpha(1 + v_m(t))\leq 1$, $\forall t\in[0,T]$. To satisfy this condition, we choose $\alpha$ and $v_{max}$ such that, $\alpha(1 + v_{max})\leq 1$.

Define $B_{full}$ as the resource required to run the campaign with full intensity throughout the campaign horizon. It is worth noting that the interesting case occurs when $0<B<B_{full}$. In this case, the solution to the problem allocates the limited resource over the campaign period. We will not consider cases where $B>B_{full}$ in this paper. For the cases considered, notice that it will never be optimal to underutilize the budget, hence we have put equality (instead of the inequality $\leq$) in Eq. (\ref{eq:opt_prob_B_d}) without loss of generality.

In Problem (\ref{eq:opt_prob}), depending on the application and situation, the campaigner can decide upon the number of direct and word-of-mouth controls, $M$, she wants to apply and the degree classes in the $m$th group, $\mathbb K_m$, they would target. Also, in some cases, it may not be feasible to apply word-of-mouth control. For such cases, we set $v_m\equiv 0,~1\leq m\leq M$.

We emphasize the implication of the factor $\alpha$ and the word-of-mouth control in our model (biological SI epidemics have $\alpha = 1$, \emph{i.e.}, every infected individual is a potential spreader). Irrespective of the value of $\alpha$ used ($0<\alpha\leq 1$), message propagation is always probabilistic whenever individuals interact, depending upon whether or not the topic of interest came up during the meeting. In our model, only $\alpha$ fraction of infected individuals `try' to spread the message; the rest of them are uninterested to begin with, and do not attempt spreading. Even the non-spreading infected individuals have the message, so a higher level of word-of-mouth control may convert some of them to spreaders (which is a reasonable model) and when word-of-mouth incentive subsides, such individuals may again become uninterested (non-spreaders). Advertising for a product, or advocating for a (risky) crowdfunding investment means putting personal reputation at stake, so the level of word-of-mouth incentive is an important factor influencing the spreading behavior of an individual. Note that the classical Susceptible-Infected-Recovered model will not capture this phenomenon because once recovered, individuals will \emph{never} spread the message. This makes the SI model (with $\alpha$) more suitable for the applications considered in this paper.

\section{Existence of a Solution}
The proof of the existence of a solution to an optimal control problem is of practical importance because sometimes even reasonable looking problems do not admit their extrema; see examples 1.1 and 1.2 in \cite[Chap. III, Sec. 1]{fleming1975deterministic}. The usual method to show the existence of a solution to an optimal control problem is to use Filippov/Cesari's existence theorem \cite[Chap. III, Sec. 2 and 4]{fleming1975deterministic}. However, they are only applicable to systems where the controls affect the system linearly. Due to the isoperimetric constraint (\ref{eq:opt_prob_B_d}), for non-linear $b_m(.)$ and $c_m(.)$, neither of the theorems is applicable in our case. This observation is based on the fact that, in the equivalent formulation of Problem (\ref{eq:opt_prob}), Eq. (\ref{eq:opt_prob_B_d}) can be replaced by:
\begin{align*}
\dot h(t) =& \sum_{m=1}^M \Big\{g_mb_m(u_m(t)) + \alpha v_m(t)\beta(t) c_m(v_m(t)) \bar i_m(t) \bar s(t) \Big\};\\
h(0) =&~0,~h(T) = B;
\end{align*}
which has, for non-linear $b_m(.)$ and $c_m(.)$, a non-linear dependence on the controls. Note that this is the standard way of handling the isoperimetric constraints in optimal control problems. Due to the non applicability of the standard Filippov/Cesari existence theorems, we show the existence of a solution to the formulated problem using first principles.

We detail the steps to show the existence of a solution to Problem (\ref{eq:opt_prob}). We make use of the `Extreme Value Theorem' \cite[Theorem 4.16]{rudin1964principles} to show the existence. It says that a continuous function on a compact space attains its supremum at some point in the compact space. The steps involved are as follows (Theorem \ref{thm:existence}): (i) We first define a compact set $\boldsymbol{W}$ using the set of admissible controls $\boldsymbol U \times \boldsymbol V$ (defined in (\ref{eq:U}) and (\ref{eq:V})) and the isoperimetric constraint (\ref{eq:opt_prob_B_d}). (ii) We then show that the functional $J$ defined in (\ref{eq:cost_funtional}) is continuous at all elements of that compact space, $(\boldsymbol u,\boldsymbol v)\in \boldsymbol{W}$. (iii) Then we show that the constraint differential equation (\ref{eq:opt_prob_states}), (\ref{eq:opt_prob_init_cond}) has a solution at all elements $(\boldsymbol u,\boldsymbol v)\in \boldsymbol{W}$, thus the constraint is satisfied for all elements of the compact space $\boldsymbol{W}$.

Let $\boldsymbol i(t)=\big(i_{K_{min}}(t),...,i_{K_{max}}(t)\big)$ denote the state variable vector. The right hand side (RHS) of equation (\ref{eq:opt_prob_states}) is denoted by $f_k(\boldsymbol i(t),t)$ (note that $(\boldsymbol u, \boldsymbol v)$ is a function of $t$). Let $\boldsymbol f(\boldsymbol i(t),t)=\big(f_{K_{min}}(\boldsymbol i(t),t),...,f_{K_{max}}(\boldsymbol i(t),t)\big)$. In this section we use 1-norm for vectors. However, the result holds for any $p$-norm due to equivalence of vector $p$-norms, $p\geq 1$ \cite{horn2012matrix}.

\begin{lemma}
\label{thm:lipschitz}
The function $\boldsymbol f(\boldsymbol i(t),t)$ is Lipschitz continuous in $\boldsymbol i(t)$.
\end{lemma}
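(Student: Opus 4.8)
The plan is to use the fact that, after substituting $s_k(t)=1-i_k(t)$, each component of $\boldsymbol f$ is a polynomial of degree two in the entries of $\boldsymbol i$ whose coefficients are uniformly bounded on $[0,T]$. Explicitly, for $k\in\mathbb K_m$,
\[
f_k(\boldsymbol i,t)=\beta(t)k(1-i_k)\!\left(\sum_{p=1}^M\sum_{l\in\mathbb K_p}\alpha(1+v_p(t))\,q_l\,i_l\right)+u_m(t)(1-i_k).
\]
First I would record the elementary, $\boldsymbol i$-independent bounds that make the coefficients uniform in $t$: $\beta(t)\le\beta_{max}:=\max_{t\in[0,T]}\beta(t)$ (finite since $\beta$ is continuous on a compact interval), $\alpha(1+v_p(t))\le 1$ by the admissibility assumption, $0\le q_l\le 1$ and $\sum_{l\in\mathbb K}q_l\le 1$ (from $q_l=(l+1)p_{l+1}/\bar k$), $k\le K_{max}$, and $0\le u_m(t)\le u_{max}$. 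Since the $i_k$ are population fractions, the relevant domain is the compact box $[0,1]^{|\mathbb K|}$, on which $|i_l|\le 1$; this is the set on which I claim the Lipschitz estimate, with a constant independent of $t$.

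Second, I would estimate $f_k(\boldsymbol i,t)-f_k(\boldsymbol j,t)$ for $\boldsymbol i,\boldsymbol j\in[0,1]^{|\mathbb K|}$ term by term. The only nonlinearity is the bilinear product $i_ki_l$; writing $i_ki_l-j_kj_l=i_k(i_l-j_l)+(i_k-j_k)j_l$ and using $|i_k|,|j_l|\le 1$ gives $|i_ki_l-j_kj_l|\le|i_l-j_l|+|i_k-j_k|$, while the terms that are linear or constant in $\boldsymbol i$ are trivially Lipschitz. Collecting the pieces and using the bounds above, $|f_k(\boldsymbol i,t)-f_k(\boldsymbol j,t)|\le c\,K_{max}\big(|i_k-j_k|+\sum_{l\in\mathbb K}|i_l-j_l|\big)$ for a constant $c$ depending only on $\beta_{max},u_{max}$ (and the trivial bounds on $\alpha,v_{max},q_l$). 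Summing over $k\in\mathbb K$ then yields $\|\boldsymbol f(\boldsymbol i,t)-\boldsymbol f(\boldsymbol j,t)\|_1\le L\,\|\boldsymbol i-\boldsymbol j\|_1$ with an explicit $L$ that does not depend on $t$; by equivalence of vector $p$-norms the same conclusion holds, with a possibly different constant, for every $p\ge 1$.

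The argument is essentially routine, and I do not expect a genuine obstacle; the only two points needing care are (i) keeping the Lipschitz constant \emph{uniform in $t$}, which is why I stress at the outset the boundedness of $\beta$ on $[0,T]$ and of the controls in the sets $U_m,V_m$, and (ii) restricting $\boldsymbol i$ to a bounded domain, since a quadratic map is not globally Lipschitz on all of $\mathbb R^{|\mathbb K|}$ --- legitimate here because the physically meaningful trajectories stay in $[0,1]^{|\mathbb K|}$. (Alternatively one could simply bound the Jacobian $\partial\boldsymbol f/\partial\boldsymbol i$ in operator norm over $[0,1]^{|\mathbb K|}\times[0,T]$ and invoke the mean value inequality.) This uniform Lipschitz bound is exactly what step (iii) of the existence programme needs, via Picard--Lindel\"of together with a Gr\"onwall-type a priori bound, to guarantee that (\ref{eq:opt_prob_states})--(\ref{eq:opt_prob_init_cond}) has a (unique) solution on $[0,T]$ for every $(\boldsymbol u,\boldsymbol v)\in\boldsymbol W$.
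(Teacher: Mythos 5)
Your proof is correct and takes essentially the same route as the paper's: both expand the difference $f_k(\boldsymbol i,t)-f_k(\boldsymbol{\hat i},t)$ termwise, handle the sole quadratic term via the bilinear splitting $i_ki_l-\hat i_k\hat i_l=i_k(i_l-\hat i_l)+(i_k-\hat i_k)\hat i_l$, and absorb all $t$-dependence into the uniform bounds $\beta_{max}$, $u_{max}$, $\alpha(1+v_{max})$, $K_{max}$ and $q_l\le 1$ before summing over $k$ in the $1$-norm. Your explicit remarks on restricting to the box $[0,1]^{|\mathbb K|}$ and on uniformity of the constant in $t$ are points the paper uses implicitly, so there is no substantive difference.
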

\begin{proof}
Notice that,
\small
\begin{align*}
& |\boldsymbol f(\boldsymbol i(t),t)-\boldsymbol f(\boldsymbol{\hat i}(t),t)|=\sum_{k\in \mathbb K}|f_k(\boldsymbol i(t),t)- f_k(\boldsymbol{\hat i}(t),t)| \\
& \leq \sum_{k\in \mathbb K}\bigg| \beta(t)k\alpha \Big[(1-i_k(t)) \sum_{m=1}^M\sum_{l\in \mathbb K_m} q_li_l(t)(1+v_m(t)) \\
& - (1-\hat i_k(t)) \sum_{m=1}^M\sum_{l\in \mathbb K_m} q_l\hat i_l(t)(1+v_m(t))\Big] \bigg| \\
& + \sum_{k\in \mathbb K}\left| u_m(t)\left( s_k(t)-\hat s_k(t) \right) \right|
\end{align*}
\begin{align*}
& \leq u_{max}|\boldsymbol i-\boldsymbol{\hat i}| \\
& + \beta_{max}K_{max}\alpha\sum_{k\in \mathbb K}\bigg| \sum_{m=1}^M\sum_{l\in \mathbb K_m}\Big( q_l(1+v_m(t))(i_l(t)-\hat i_l(t)) \\
& - q_l(1+v_m(t))\big(\underbrace{i_l(t)i_k(t)-\hat i_l(t) \hat i_k(t)}_{i_li_k-i_l\hat i_k+i_l\hat i_k-\hat i_l\hat i_k}\big) \Big) \bigg| \\
& \leq u_{max}|\boldsymbol i-\boldsymbol{\hat i}| + \beta_{max}K_{max}^2\alpha(1+v_{max}) |\boldsymbol i-\boldsymbol{\hat i}| \\
& + \beta_{max}K_{max}^2\alpha(1+v_{max})|\boldsymbol i-\boldsymbol{\hat i}|\times 2.
\end{align*}
\normalsize
Thus, $|\boldsymbol f(\boldsymbol i(t),t)-\boldsymbol{f}(\boldsymbol{\hat i}(t),t)|\leq C|\boldsymbol i(t)-\boldsymbol{\hat i}(t)|$ which establishes Lipschitz continuity of $\boldsymbol f(\boldsymbol i(t),t)$ in $\boldsymbol i(t)~\forall~(\boldsymbol u,\boldsymbol v) \in \boldsymbol{U} \times \boldsymbol{V}$.
\end{proof}

\begin{lemma}
\label{thm:continuous_dependence}
If $\boldsymbol{\tilde i(t)}$ is a solution of the system of ODEs (\ref{eq:opt_prob_states}) and (\ref{eq:opt_prob_init_cond}), then $\boldsymbol{\tilde i(t)}$ is a continuous function of $(\boldsymbol u,\boldsymbol v) \in \boldsymbol{U} \times \boldsymbol{V}$.
\end{lemma}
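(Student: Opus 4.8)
The plan is to show that the solution map $(\boldsymbol u, \boldsymbol v) \mapsto \boldsymbol{\tilde i}$ is continuous by combining Lemma~\ref{thm:lipschitz} with a Gr\"onwall-type estimate on the difference of two trajectories corresponding to nearby controls. First I would note that the right-hand side $\boldsymbol f$ of (\ref{eq:opt_prob_states}) depends on the controls only through the terms $v_m(t)$ (appearing in the word-of-mouth factor $\alpha(1+v_p(t))$) and $u_m(t)$ (appearing in the direct-recruitment term $u_m(t) s_k(t)$). Since $\boldsymbol f$ is jointly continuous in $(\boldsymbol i, \boldsymbol u, \boldsymbol v)$ and, for fixed $(\boldsymbol u, \boldsymbol v)$, Lipschitz in $\boldsymbol i$ (Lemma~\ref{thm:lipschitz}, with a Lipschitz constant $C$ that is \emph{uniform} over $\boldsymbol U \times \boldsymbol V$ because the bound only used $u_{max}$ and $v_{max}$), the hypotheses for continuous dependence of ODE solutions on parameters are in place. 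I would also invoke the fact, already established via Lemma~\ref{thm:lipschitz} and Picard--Lindel\"of, that a unique solution $\boldsymbol{\tilde i}$ exists on all of $[0,T]$ for each admissible control (its components stay in $[0,1]$ and hence remain bounded).

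The main step is the estimate. Fix $(\boldsymbol u, \boldsymbol v)$ and $(\boldsymbol{\hat u}, \boldsymbol{\hat v})$ in $\boldsymbol U \times \boldsymbol V$, let $\boldsymbol{\tilde i}$ and $\boldsymbol{\hat i}$ denote the corresponding solutions, and write
\begin{align*}
|\boldsymbol{\tilde i}(t) - \boldsymbol{\hat i}(t)| &\leq \int_0^t \big|\boldsymbol f(\boldsymbol{\tilde i}(\tau),\tau) - \boldsymbol f(\boldsymbol{\hat i}(\tau),\tau)\big|\, d\tau,
\end{align*}
where in $\boldsymbol f(\boldsymbol{\hat i},\tau)$ the controls are still $(\boldsymbol u, \boldsymbol v)$. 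Then I would add and subtract $\boldsymbol f$ evaluated at $\boldsymbol{\hat i}$ with the controls switched to $(\boldsymbol{\hat u}, \boldsymbol{\hat v})$, bounding the first piece by $C |\boldsymbol{\tilde i}(\tau) - \boldsymbol{\hat i}(\tau)|$ using Lemma~\ref{thm:lipschitz} and the second piece by a term of the form $C' \big(\sum_{m=1}^M \|u_m - \hat u_m\|_\infty + \|v_m - \hat v_m\|_\infty\big)$, which follows because $\boldsymbol f$ is affine in $u_m$ with coefficient $s_k \in [0,1]$ and affine in $v_p$ with a coefficient bounded by $\beta_{max} K_{max}^2 \alpha$ (the state components being in $[0,1]$). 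Applying Gr\"onwall's inequality then yields
\begin{align*}
\sup_{t\in[0,T]} |\boldsymbol{\tilde i}(t) - \boldsymbol{\hat i}(t)| &\leq C' T e^{CT} \sum_{m=1}^M \big(\|u_m - \hat u_m\|_\infty + \|v_m - \hat v_m\|_\infty\big),
\end{align*}
which is precisely the continuity statement (the metric on $\boldsymbol U \times \boldsymbol V$ being the sup-norm on each coordinate function).

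I expect the only real subtlety to be bookkeeping rather than mathematical depth: one must be careful that the Lipschitz constant $C$ from Lemma~\ref{thm:lipschitz} does not depend on which control is being used (it does not, as the proof there only invokes $u_{max}$, $v_{max}$, $\beta_{max}$, $K_{max}$, $\alpha$), and one must state clearly what topology is placed on $\boldsymbol U \times \boldsymbol V$ so that ``continuous function of $(\boldsymbol u, \boldsymbol v)$'' is meaningful --- the natural choice being uniform convergence of each $u_m$ and $v_m$. A minor point worth checking is that the estimate of the control-perturbation piece genuinely produces $\|u_m - \hat u_m\|_\infty$ and not merely an $L^1$ norm; since we pull the sup out of the time integral and pick up a factor $T$, this is immediate. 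With these observations in place the proof is a one-paragraph Gr\"onwall argument, and the equicontinuity assumption built into $\Psi$ is not even needed here --- it will be used later, when compactness of $\boldsymbol W$ is established for the Extreme Value Theorem.
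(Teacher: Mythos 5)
Your proposal is correct, and its mathematical content coincides with the paper's: both arguments rest on (i) the uniform-in-control Lipschitz constant from Lemma \ref{thm:lipschitz} and (ii) the bound $|\boldsymbol f(\boldsymbol i(t),t)-\boldsymbol{\hat f}(\boldsymbol i(t),t)|\leq \beta_{max}K_{max}^2\alpha\,|\boldsymbol v-\boldsymbol{\hat v}|+K_{max}|\boldsymbol u-\boldsymbol{\hat u}|$, which is exactly the estimate the paper computes. The only difference in route is that the paper then invokes a cited ``Theorem on Continuous Dependence'' as a black box, whereas you unpack that theorem and run the Gr\"onwall argument by hand; your version is self-contained and yields a quantitative payoff the paper does not state, namely that the solution map is Lipschitz from $(\boldsymbol U\times\boldsymbol V,\|\cdot\|_\infty)$ into $C([0,T])$ with modulus $C'Te^{CT}$, while the paper's is only a qualitative $\epsilon$--$\delta$ statement. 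Your closing observations (uniformity of $C$ over the control set, the sup-norm topology, and the fact that equicontinuity of $\Psi$ is not needed here but only later for compactness of $\boldsymbol W$) are all accurate. One small slip in the write-up: the integrand in your first display should be $\bigl|\boldsymbol f(\boldsymbol{\tilde i}(\tau),\tau)-\boldsymbol{\hat f}(\boldsymbol{\hat i}(\tau),\tau)\bigr|$, i.e.\ the second term must carry the perturbed controls $(\boldsymbol{\hat u},\boldsymbol{\hat v})$, since $\boldsymbol{\hat i}$ solves the perturbed equation; your subsequent add-and-subtract step makes clear this is what you intended, so it is a bookkeeping issue rather than a gap.
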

\begin{proof}
The continuity of $\boldsymbol{\tilde i}(t)$ can be shown by using the `Theorem on Continuous Dependence' \cite[pg. 145]{walter1998ordinary} of the solution of an ordinary differential equation on the vector field on the RHS. The theorem on continuous dependence states that if $\boldsymbol{\tilde i}(t), t\in[0,T]$ is a solution of (\ref{eq:opt_prob_states}),(\ref{eq:opt_prob_init_cond}); then given $\epsilon$ there exist $\delta$ such that $\big|\boldsymbol{\tilde i}(t)-\boldsymbol{\hat{i}}(t)\big| < \epsilon$, whenever $\big|\boldsymbol f(\boldsymbol i(t),t) - \boldsymbol{\hat{f}}(\boldsymbol i(t),t)\big| < \delta$, for $t\in[0,T]$. Here $\boldsymbol{\hat{i}}(t)$ is the solution of a perturbed version of (\ref{eq:opt_prob_states}),(\ref{eq:opt_prob_init_cond}), $\dot{i}_k(t)=\hat{f}_k(\boldsymbol i(t),t)$, $\forall k\in \mathbb K$, where $(\boldsymbol u, \boldsymbol v)$ in $\boldsymbol f(\boldsymbol i(t),t)$ is perturbed to $(\boldsymbol{\hat{u}},\boldsymbol{\hat{v}})$ to get $\boldsymbol{\hat{f}}(\boldsymbol i(t),t)$.

Notice that,
\small
\begin{align*}
& |\boldsymbol f(\boldsymbol i(t),t)-\boldsymbol{\hat f}(\boldsymbol i(t),t)|=\sum_{k\in \mathbb K}|f_k(\boldsymbol i(t),t)-\hat f_k(\boldsymbol i(t),t)| \\
& \leq \sum_{k\in \mathbb K}\left| \beta(t)s_k(t)k\alpha \sum_{m=1}^M\sum_{l\in \mathbb K_m} q_li_l(t)\left(v_m(t)-\hat v_m(t)\right) \right|\\
& + \sum_{k\in \mathbb K} \left| s_k(t)\left( u_m(t)-\hat u_m(t) \right) \right| \\
& \leq \beta_{max}K_{max}^2\alpha|\boldsymbol v - \boldsymbol{\hat v}| + K_{max}|\boldsymbol u - \boldsymbol{\hat u}|,
\end{align*}
\normalsize
where, maximum values $\beta_{max}$ for $\beta(t)$, $K_{max}$ for $k$, 1 for $s_k(t)$, 1 for $i_k(t)$, and 1 for $q_k$ are used in the last step (note that the excess degree distribution, $q_k$, is a probability mass function which is non-negative and sums to 1). So when,
\begin{align*}
& |\boldsymbol v - \boldsymbol{\hat v}|+|\boldsymbol u - \boldsymbol{\hat u}|\leq \frac{\delta}{\textnormal{max}\{\beta_{max}K_{max}^2\alpha, K_{max}\}}
\end{align*}
we have, $|\boldsymbol f - \boldsymbol{\hat f}| \leq \delta \Rightarrow |\boldsymbol {\tilde i}(t) - \boldsymbol{\hat i}(t)| \leq \epsilon, \forall t\in [0~T];$ which establishes the continuity of $\boldsymbol{\tilde i}(t)$ in $(\boldsymbol u, \boldsymbol v)$.

The `Theorem on Continuous Dependence' also requires Lipschitz continuity of $\boldsymbol f(.)$ in the state variable vector $\boldsymbol i(t)$, which follows from Lemma \ref{thm:lipschitz}.
\end{proof}

Define $\boldsymbol{W'}\triangleq \{(\boldsymbol u,\boldsymbol v):\int_0^T \sum_{m=1}^M \{g_mb_m(u_m(t)) + \alpha v_m(t) \beta(t) c_m(v_m(t)) \bar i_m(t) \bar s(t) \} dt = B\}$, where $g_m=\sum_{k \in \mathbb K_m} p_k$, $\bar i_m(t)=\sum_{k \in \mathbb K_m} ki_k(t)p_k$ and $\bar s(t)=\sum_{k\in \mathbb K} (kp_k/\bar k)s_k(t)$. Mean degree of the network is represented by $\bar k$ and $i_k(t)$'s are the solution to the initial value problem (IVP) (\ref{eq:opt_prob_states}), (\ref{eq:opt_prob_init_cond}).

\begin{lemma}
\label{thm:W_compact}
The set $\boldsymbol{W}$ is compact, where $\boldsymbol{W}\triangleq\boldsymbol{W'} \cap (\boldsymbol{U}\times \boldsymbol{V})$.
\end{lemma}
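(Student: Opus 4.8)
The plan is to show that $\boldsymbol{W} = \boldsymbol{W'} \cap (\boldsymbol U \times \boldsymbol V)$ is compact by invoking sequential compactness: since $\boldsymbol U \times \boldsymbol V$ lives in a metric space (the space of continuous $\mathbb R^{2M}$-valued functions on $[0,T]$ with the sup-norm), it suffices to show that $\boldsymbol W$ is a closed and totally bounded (equivalently, relatively compact) subset, which for a subset of $C([0,T])$ means applying the Arzel\`a--Ascoli theorem plus closedness. Concretely, I would proceed in three steps: (i) show $\boldsymbol U \times \boldsymbol V$ is compact; (ii) show $\boldsymbol{W'}$ is closed; (iii) conclude that $\boldsymbol W$, being the intersection of a compact set with a closed set, is compact.

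For step (i), each $U_m$ (and each $V_m$) consists of functions in $\Psi$ that are pointwise bounded in $[0, u_{max}]$ (resp.\ $[0, v_{max}]$). By definition $\Psi$ is a family of equicontinuous functions on $[0,T]$, and the pointwise bound gives uniform boundedness; hence by Arzel\`a--Ascoli $U_m$ is relatively compact in $(C([0,T]), \|\cdot\|_\infty)$. Closedness of $U_m$ follows because a uniform limit of equicontinuous functions satisfying $0 \le \sigma(t) \le u_{max}$ again satisfies the same equicontinuity modulus $C_\Psi(\epsilon)$ and the same bounds, so the limit is in $U_m$. Thus each $U_m, V_m$ is compact, and the finite product $\boldsymbol U \times \boldsymbol V$ is compact in the product (equivalently sup-) metric.

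For step (ii), I would show $\boldsymbol{W'}$ is closed by taking a sequence $(\boldsymbol u^{(n)}, \boldsymbol v^{(n)}) \in \boldsymbol{W'}$ converging uniformly to some $(\boldsymbol u, \boldsymbol v)$ and showing the limit satisfies the budget equality. The key point is that the map $(\boldsymbol u, \boldsymbol v) \mapsto \int_0^T \sum_m \{ g_m b_m(u_m(t)) + \alpha v_m(t)\beta(t) c_m(v_m(t)) \bar i_m(t) \bar s(t)\}\, dt$ is continuous in the sup-norm topology. This uses: continuity of $b_m(\cdot)$ and $c_m(\cdot)$ (assumed); boundedness of $\beta(t)$ on $[0,T]$; Lemma \ref{thm:continuous_dependence}, which gives that $\boldsymbol i(t)$ (hence $\bar i_m(t)$ and $\bar s(t)$) depends continuously on $(\boldsymbol u, \boldsymbol v)$ in the uniform topology; and the dominated convergence theorem (or uniform convergence of the integrands on the compact interval $[0,T]$) to pass the limit through the integral. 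Since the budget functional equals the constant $B$ along the sequence and is continuous, it equals $B$ at the limit, so $\boldsymbol{W'}$ is closed.

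Finally, $\boldsymbol W = \boldsymbol{W'} \cap (\boldsymbol U \times \boldsymbol V)$ is a closed subset of the compact set $\boldsymbol U \times \boldsymbol V$, hence compact. The main obstacle is step (ii): establishing continuity of the budget functional requires carefully chaining the continuous dependence of the state trajectory on the controls (Lemma \ref{thm:continuous_dependence}) with the continuity of $b_m, c_m$ and then justifying the interchange of limit and integral; the other steps are routine applications of Arzel\`a--Ascoli and the fact that closed subsets of compact sets are compact. One subtlety to flag: I should confirm that the state ODE (\ref{eq:opt_prob_states})--(\ref{eq:opt_prob_init_cond}) indeed has a (unique) solution for every admissible control so that $\bar i_m(t), \bar s(t)$ are well-defined on all of $\boldsymbol U \times \boldsymbol V$ — this follows from Lemma \ref{thm:lipschitz} together with the Picard--Lindel\"of theorem, and is presumably the content of the third bullet in the authors' outline.
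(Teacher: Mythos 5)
Your proposal is correct and follows essentially the same route as the paper: compactness of $\boldsymbol U \times \boldsymbol V$ via Arzel\`a--Ascoli on the equicontinuous, equibounded families $U_m, V_m$; closedness of $\boldsymbol{W'}$ via continuity of the budget functional, which in turn rests on Lemma \ref{thm:continuous_dependence}; and then compactness of $\boldsymbol W$ as the intersection of a closed set with a compact set. The extra details you flag (passing the limit through the integral, and well-posedness of the state ODE via Lemma \ref{thm:lipschitz}) are implicit in the paper's argument and correctly identified.
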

Note that the set $\boldsymbol{W}$ consists of all controls in $\boldsymbol{U}\times \boldsymbol{V}$ for which the budget constraint (\ref{eq:opt_prob_B_d}) is satisfied.
\begin{proof}
\emph{Step I: The sets $\boldsymbol U$ and $\boldsymbol V$ are compact:} Note that $U_m$ in (\ref{eq:U_m}) and $V_m$ in (\ref{eq:V_m}) are equicontinuous and equibounded sets and hence precompact (Arzela-Ascoli Theorem \cite[Theorem 1.6.3]{atkinson2009theoretical}). In addition, $U_m$ and $V_m$ are closed by definition. Hence $U_m$ and $V_m$ are compact sets. Now, $\boldsymbol U$ and $\boldsymbol V$ being cross products of finite number of compact sets are compact (Tychonoff's Theorem \cite[Page 392]{rudin1973functional}).

\emph{Step II: The set $\boldsymbol{W'}$ is closed:} From Lemma \ref{thm:continuous_dependence}, the solution to the IVP (\ref{eq:opt_prob_states}), (\ref{eq:opt_prob_init_cond}) is a continuous function of  $(\boldsymbol u,\boldsymbol v) \in \boldsymbol{U} \times \boldsymbol{V}$. Thus, $\bar i_m(t)$ and $\bar s(t)$ are continuous functions of $(\boldsymbol u,\boldsymbol v) \in \boldsymbol{U} \times \boldsymbol{V}$. Also, $b_m(),~c_m()$ are continuous in their arguments, hence $\boldsymbol{W'}$ can be expressed as $\boldsymbol{W'} = \{(\boldsymbol u,\boldsymbol v):\eta(\boldsymbol u,\boldsymbol v)=B\}$, where $\eta(.)$ is a continuous function in its arguments. Using standard techniques in real analysis, we can show $\boldsymbol{W'}$ is closed. Let $\{(\boldsymbol u_n,\boldsymbol v_n)\}$ be a sequence such that $(\boldsymbol u_n,\boldsymbol v_n)\in \boldsymbol{W'}~\forall n\in \mathbb Z^+$, the set of non-negative integers. Let the limit point of the sequence be $(\boldsymbol{\bar u},\boldsymbol{\bar v})$. Then, $\eta(\boldsymbol{\bar u},\boldsymbol{\bar v})=\eta(\lim_{n\rightarrow \infty} \boldsymbol{u}_n,\lim_{n\rightarrow \infty} \boldsymbol{v}_n)=\lim_{n\rightarrow \infty} \eta(\boldsymbol{u}_n, \boldsymbol{v}_n)=\lim_{n\rightarrow \infty} B = B$. Third to last equality follows from the fact that $\eta(.)$ is continuous in its arguments. Thus the limit point lies in $\boldsymbol{W'}$, which means $\boldsymbol{W'}$ is closed.

\emph{Step III: The set $\boldsymbol W$ is compact:} The product of compact sets $\boldsymbol{U} \times \boldsymbol{V}$ is compact (Tychonoff's Theorem \cite[Page 392]{rudin1973functional}). The intersection of closed and compact sets $\boldsymbol{W'} \cap (\boldsymbol{U}\times \boldsymbol{V}) \triangleq \boldsymbol{W}$ is compact \cite[Page 38]{rudin1964principles}.
\end{proof}

\begin{thm}
\label{thm:existence}
There exist optimal control signals $(\boldsymbol u^*, \boldsymbol v^*)\in \boldsymbol{U} \times \boldsymbol{V}$ and the corresponding solutions $\boldsymbol{i}(t)$ to the IVP (\ref{eq:opt_prob_states}), (\ref{eq:opt_prob_init_cond}) such that $(\boldsymbol u^*, \boldsymbol v^*)\in \underset{(\boldsymbol u, \boldsymbol v) \in \boldsymbol{U} \times \boldsymbol{V}}{\textnormal{ argmax }} \{J(\boldsymbol u, \boldsymbol v)\}$ in Problem (\ref{eq:opt_prob}).
\end{thm}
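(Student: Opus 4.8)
The plan is to apply the Extreme Value Theorem \cite[Theorem 4.16]{rudin1964principles}: a continuous real-valued function on a non-empty compact set attains its supremum. The feasible region of Problem (\ref{eq:opt_prob}) is exactly $\boldsymbol W = \boldsymbol{W'} \cap (\boldsymbol U \times \boldsymbol V)$ --- the admissible controls that meet the budget constraint (\ref{eq:opt_prob_B_d}) --- and Lemma \ref{thm:W_compact} has already shown $\boldsymbol W$ to be compact. It therefore remains to: (a) confirm that the state IVP is solvable for every control in $\boldsymbol U \times \boldsymbol V$, so that $J$ and the budget functional are well defined on $\boldsymbol W$; (b) show $J$ is continuous on $\boldsymbol W$; and (c) note $\boldsymbol W \neq \emptyset$. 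Then the theorem applies directly.

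For (a): fix any $(\boldsymbol u,\boldsymbol v) \in \boldsymbol U \times \boldsymbol V$. The right-hand side $\boldsymbol f(\boldsymbol i(t),t)$ of (\ref{eq:opt_prob_states}) is continuous in $t$ (since $u_m,v_m,\beta$ are) and, by Lemma \ref{thm:lipschitz}, globally Lipschitz in $\boldsymbol i$ uniformly in $t$; by the Picard--Lindel\"of theorem the IVP (\ref{eq:opt_prob_states}), (\ref{eq:opt_prob_init_cond}) has a unique local solution, and since each $i_k(t)$ stays in $[0,1]$ it cannot escape to infinity, so the solution extends to all of $[0,T]$. Hence $\boldsymbol i(T)$, and thus $J(\boldsymbol u,\boldsymbol v) = \sum_{k\in\mathbb K} p_k i_k(T)$, is well defined on $\boldsymbol U \times \boldsymbol V$.

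For (b): by Lemma \ref{thm:continuous_dependence} the map $(\boldsymbol u,\boldsymbol v) \mapsto \boldsymbol{\tilde i}(\cdot)$ is continuous, so in particular each coordinate map $(\boldsymbol u,\boldsymbol v) \mapsto i_k(T)$ is continuous. As $J$ is the fixed finite convex combination $\sum_{k\in\mathbb K} p_k i_k(T)$ of these maps (the $p_k$ are constants summing to $1$), it is continuous on $\boldsymbol U \times \boldsymbol V$, hence on $\boldsymbol W$. For (c): the budget functional $\eta(\boldsymbol u,\boldsymbol v)$ from Step~II of the proof of Lemma \ref{thm:W_compact} is continuous on $\boldsymbol U \times \boldsymbol V$; evaluating it along the constant controls $u_m \equiv \lambda u_{max}$, $v_m \equiv \lambda v_{max}$, $1\leq m\leq M$, for $\lambda \in [0,1]$, gives budget $0$ at $\lambda = 0$ and $B_{full} > B$ at $\lambda = 1$, so by the intermediate value theorem some $\lambda$ yields budget exactly $B$, exhibiting a point of $\boldsymbol W$. (Alternatively, $\boldsymbol W \neq \emptyset$ is implicit in the standing assumption $0 < B < B_{full}$.)

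Combining (a)--(c): $J$ is continuous on the non-empty compact set $\boldsymbol W \subseteq \boldsymbol U \times \boldsymbol V$, so by the Extreme Value Theorem there exist $(\boldsymbol u^*,\boldsymbol v^*) \in \boldsymbol W$ attaining $\max_{(\boldsymbol u,\boldsymbol v)\in \boldsymbol W} J(\boldsymbol u,\boldsymbol v)$, with associated trajectory the solution $\boldsymbol i(t)$ of the IVP; this is the claimed optimal pair, since $\boldsymbol W$ is precisely the set of controls feasible for Problem (\ref{eq:opt_prob}). The delicate point --- and the reason the classical Filippov/Cesari existence theorems are bypassed --- is the non-linear dependence of the budget functional on the controls, both directly through $b_m,c_m$ and through the state-dependent weights $\bar i_m(t),\bar s(t)$; establishing closedness of $\boldsymbol W$ under this non-linearity is the real obstacle, and it has already been secured in Lemma \ref{thm:W_compact} via the continuous-dependence estimate of Lemma \ref{thm:continuous_dependence}, so the remaining steps assembled here are routine.
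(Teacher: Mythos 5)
Your proposal is correct and follows essentially the same route as the paper: invoke the Extreme Value Theorem on the compact feasible set $\boldsymbol W$ (Lemma \ref{thm:W_compact}), with solvability of the IVP from the Lipschitz bound (Lemma \ref{thm:lipschitz}) and continuity of $J$ from continuous dependence (Lemma \ref{thm:continuous_dependence}). Your explicit verification that $\boldsymbol W \neq \emptyset$ via the intermediate value theorem is a small but worthwhile addition that the paper leaves implicit in the standing assumption $0 < B < B_{full}$.
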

\begin{proof}
In light of the discussion earlier in the section, this theorem can be proved in three steps:

\emph{Step I: The set $\boldsymbol{W}$ is compact:} Lemma \ref{thm:W_compact}.

\emph{Step II: Initial value problem (\ref{eq:opt_prob_states}),(\ref{eq:opt_prob_init_cond}) has a solution for any $(\boldsymbol u,\boldsymbol v) \in \boldsymbol{W}$:} The solution to the initial value problem (\ref{eq:opt_prob_states}),(\ref{eq:opt_prob_init_cond}) (aggregated for all $k\in \mathbb K$) exists because $\boldsymbol f(\boldsymbol i(t),t)$ is Lipschitz continuous in $\boldsymbol i(t)~\forall~(\boldsymbol u,\boldsymbol v) \in \boldsymbol{U} \times \boldsymbol{V} \supseteq \boldsymbol{W}$ \cite[pg. 185]{birkhoff1989ordinary}. Lipschitz continuity follows from Lemma \ref{thm:lipschitz}.

\emph{Step III: The functional $J=\sum_{k\in\mathbb K}p_ki_k(T)$ in (\ref{eq:cost_funtional}) is continuous at all $(\boldsymbol u,\boldsymbol v) \in \boldsymbol{W}$:} From Lemma \ref{thm:continuous_dependence}, if $\boldsymbol{i}(t)$ is the solution of IVP (\ref{eq:opt_prob_states}), (\ref{eq:opt_prob_init_cond}), then $\boldsymbol i(T) = \boldsymbol i(t)|_{t=T}$ is continuous in $(\boldsymbol u, \boldsymbol v) \in \boldsymbol{U} \times \boldsymbol{V}$. But $\boldsymbol{W}$ is a compact subset of $\boldsymbol{U} \times \boldsymbol{V}$.
\end{proof}

\section{Solution via Non-linear Programming}

Once we have established the existence of a solution to Problem (\ref{eq:opt_prob}), we proceed to calculate it. We solved Problem (\ref{eq:opt_prob}) by using algorithms that are available to solve optimal control problems by converting them into discrete non-linear optimization problems \cite{becerra2004solving}. Matlab's non-linear optimization solver \texttt{fmincon()} is used to solve the discretized version of Problem (\ref{eq:opt_prob}). The function solves the constrained non-linear optimization problems and requires the objective function and the constraints as inputs. We brief the discretization steps in the following.

The campaign duration $[0,T]$ is sampled at $N+1$ equidistant time-points, such that $t_0=0$ and $t_N=T$. Let $i_k(t_n), u_m(t_n)$ and $v_m(t_n)$ be the values of the state and control variables at those instances, $n=0,1,...,N$; and let $\Delta t=t_1-t_0=...=t_{N}-t_{N-1}$. The idea is to keep $N$ large enough to achieve the desired accuracy and compute the values of $u_m(t_n)$ and $v_m(t_n)$, $\forall n,~\forall m$ using the optimization routine.

\emph{Objective function:} The objective to be maximized is $\sum_{k\in \mathbb K} p_ki_k(t_N)$. However, to evaluate this, we need to solve the initial value problem (IVP) (\ref{eq:opt_prob_states}) and (\ref{eq:opt_prob_init_cond}). We do so by following Heun's method \cite[Sec. 1.1]{kutz2005practical} to solve the IVP. The global error due to discretization is bounded by $O(\Delta t^2)$ for this method.

Let the right hand side of Eq. (\ref{eq:opt_prob_states}) be denoted by $f_k(\boldsymbol i(t),t)$, where $\boldsymbol i(t)$ is vector of all $i_k(t)$; notice that $s_k(t)=1-i_k(t)$ and $u_m(t), v_m(t)$ are just functions of $t$. At the initial time-point, $i_k(t_0)=i_0,~\forall k$. The value of $i_k(t_n)$ for $1\leq n\leq N$, for a given $k$, is computed using both the left and right derivatives as follows:
\small
\begin{align}
i_k(t_n)  = & i_k(t_{n-1}) + \frac{\Delta t}{2} \Bigg[ f_k\big(\boldsymbol i(t_{n-1}),t_{n-1}\big) \nonumber \\
& + \overbrace{f_k\bigg(\underbrace{\boldsymbol i(t_{n-1}) + \Delta t f_k\big(\boldsymbol i(t_{n-1}),t_{n-1}\big)}_{\textnormal{approximation of } \boldsymbol i(t_n)},t_n\bigg)}^{\textnormal{approximate right derivative}} \Bigg]. \nonumber
\end{align}
\normalsize
The value $\boldsymbol i(t_N)$, at the last time-point is used in the objective function.

\emph{Constraints:} The following constraints are fed as inputs to the routine: 
\begin{enumerate}
\item The inequality constraints, $u_m(t_n)\ge 0$, $v_m(t_n)\ge 0$, $u_m(t_n)\le u_{max}$, $v_m(t_n)\le v_{max}$, $\forall n,~\forall m$.
\item The budget constraint (\ref{eq:opt_prob_B_d}) leads to the equality constraint: $\sum_{n=1}^N\sum_{m=1}^M \Big\{ g_mb_m(u_m(t_n)) + \alpha v_m(t_n) \beta(t_n) c_m(v_m(t_n)) \bar i_m(t_n) \bar s(t_n) \Big\}\Delta t - B = 0$. Here $g_m=\sum_{k\in \mathbb K_m} p_k$, $\bar i_m(t_n)=\sum_{k \in \mathbb K_m} ki_k(t_n)p_k$, and $\bar s(t_n)=\sum_{k\in \mathbb K} (kp_k/\bar k)s_k(t_n)$. Values of $i_k(t_n),~\forall n,~\forall k$, are obtained from the computation above and $s_k(t_n)=1-i_k(t_n)$.
\end{enumerate}

The computation in the optimization routine is initialized with some initial guess for $u_m(t_n)$ and $v_m(t_n)$, $\forall n,~\forall m$ and the routine refines it until the stopping criteria are met. The function \texttt{fmincon()} uses a combination of factors, like change in the objective function value, change in the values of the variables being optimized, magnitude of the gradient etc. to decide the stopping criteria.

We have used Heun's method over other possibilities such as Euler and Runge-Kutta methods. For reasonable values of $N$ (so that the number of optimization variables is not too large), Euler's method was numerically unstable for certain parameter values (\emph{e.g.} high $\beta$). This is so because Euler's method is only $O(\Delta t)$ accurate. The Runge-Kutta method requires the value of the optimization variables at $2\times N$ time points for $N$ point discretization \cite[Sec. 1.1]{kutz2005practical}. Although more accurate for small systems (with small values of $M$), the memory and execution time requirements (in the optimization routine) for larger systems will increase non-linearly, which makes its use unsuitable. Heun's method offers a compromise between these factors (accuracy and memory requirement/execution time) and works well in practice.

\section{Networks used in this Study}

\begin{figure}[ht!]
\centering
\includegraphics[width=55mm]{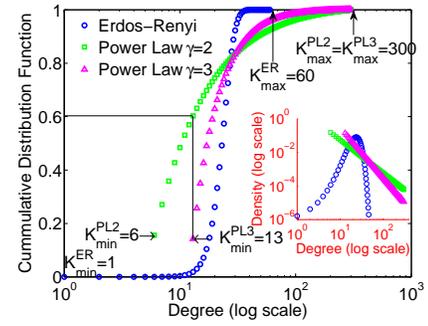}
\caption{\small{Cumulative degree distribution and degree distribution (inset) (defined in Sec. \ref{sec:sys_model}) of the networks used in this paper.}}
\label{fig:degree_distribution}
\end{figure}

We have used degree distributions from three networks to demonstrate the results. The first one is an Erd\H os-R\'enyi network, which is known to have a Poisson degree distribution. The probability of finding a node with degree $k$ is, $p_k = e^{-\lambda}\lambda^k/k!,~\forall k\in \mathbb K$, where $\lambda$ is the mean degree in the network and $k!$ denotes the factorial of the integer $k$. The other two are scale free networks, which are known to follow the power law degree distribution, $p_k = ck^{-\gamma}, \forall k \in \mathbb K$, where $c$ is a properly chosen scalar to normalize the degree distribution to $1$. Here $\gamma$ is the exponent of the power law and lies between $2$ and $3$ for most real networks (including social networks) \cite{newman2010networks}. 
Hence we have chosen $\gamma=2$ and $\gamma=3$ for the two scale free networks used in this study. We name the networks ER, PL2 and PL3 in the rest of the paper.

The three networks are chosen such that the mean degrees for all three of them are almost the same. They cannot be made exactly the same because the minimum and maximum degrees for the networks are discrete quantities. The minimum and maximum degrees for the scale free networks are $K_{min}^{PL2}=6$, $K_{max}^{PL2}=300$, $K_{min}^{PL3}=13$, $K_{max}^{PL3}=300$, which yield the mean degree for the two networks as $\bar k_{PL2}=22.47$, $\bar k_{PL3}=24.03$. The mean degree for the Erd\H os-R\'enyi network is set to $\bar k_{ER}=\lambda = 23.60$, with minimum and maximum degrees as $K_{min}^{ER}=1$, $K_{max}^{ER}=60$.

The maximum degrees for both the scale free networks are the same, so that the hubs (which will aid in information spreading) have similar degrees in both the cases. We preferred to keep the mean degree the same for all the three networks because, for any given (large) size, all three of them will have (almost) the same number of links. Thus, none of the networks will have a statistical advantage in spreading the information, which propagates through the links of the network. The probability distribution and density functions of the degrees of the three networks are shown in Fig. \ref{fig:degree_distribution}.

\section{Results and Discussions}

In this section, we first explain the values for the model parameters and validate the degree based compartmental model for the SI process for configuration model networks. Then we present the results demonstrating the importance of timing the incentives properly, and identify the degree classes which are more useful to target for maximum spreading. Then we study the effect of various model parameters on the reward functional, $J$ and compare the optimal strategy with static and bang-bang control strategies. In the static control strategy, the direct and word-of-mouth controls are implemented at $\kappa\leq 1$ times the maximum allowed value in all degree classes. The value of $\kappa$ is selected so that this strategy respects the same budget constraint as the optimal control strategy. In the bang-bang strategy, direct and word-of-mouth controls are implemented in all degree classes at the maximum strength till the resource lasts. As will be seen in this section, control strengths are strongest in the early stages of the campaign and gradually subside; this motivates the bang-bang strategy.

\subsection{Default Model Parameters and Validation of Degree Based Compartmental Model}
\label{sec:defalut_parameters}

\begin{figure}[ht!]
\centering
\includegraphics[width=53mm]{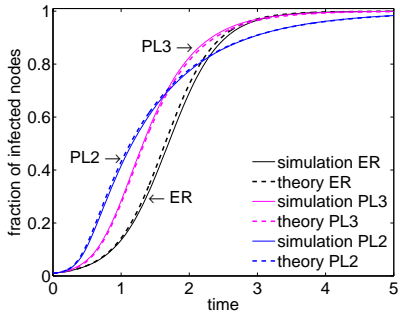}
\caption{\small{Validation of the degree based compartmental model for an SI epidemic on $10^4$ node configuration model networks. We average over 20 simulations. $\alpha=1$, $\beta=0.12$, $i_0=0.01$.}}
\label{fig:sim_vs_theory_SI_config_model}
\end{figure}

\emph{Model Parameters:} Unless otherwise stated, the model parameters are set to the following values for the results reported in the rest of this section. The fraction of infected nodes (relative to the total population) at the start of the epidemic, $i_0$, is set to $0.01$. The small value of $i_0$ implies that we have started campaigning at the early stages of the epidemic. 

The campaign deadline and spreading rate together decide the extent of spreading in the uncontrolled system. We have normalized the campaign deadline to $T=1$ and the spreading rate is set to $\beta=0.12$ (for cases where the spreading rate is constant over the campaign period). We choose such a value of $\beta$ because it leads to low to moderate spreading in the uncontrolled system ($i(T)=0.040, 0.058$ and $0.126$ for ER, PL3 and PL2 networks respectively, for $i_0=0.01$), the situation which would require campaigning. 

The value of $\alpha$, the fraction of the infected population interested in further spreading the message is set to $0.5$. The maximum value of direct control is set to $u_{max}=0.12$, equal in magnitude to the spreading rate, irrespective of whether word-of-mouth control is used or not. The maximum value for the word-of-mouth control is set to $v_{max}=0.5$; thus, we can only increase the fraction of spreaders by a maximum of $50\%$ by applying word-of-mouth control, which seems reasonable.

The functions deciding the cost of applying the direct and word-of-mouth controls are, $b_m(u(t)) = \hat b_m u^2(t)$ and $c_m(v(t)) = d\hat c_m v^2(t)$, with $\hat b_m=\hat c_m = 1,~\forall m$. That is, the default value of convincing all the degree class groups (captured by $\hat b_m$ and $\hat c_m$) is the same. The parameter $d$ captures the relative cost of using word-of-mouth control over direct control and is set to $0.5$.

The budget, $B$ is set to $u_{max}^2\times T/8$, which is one-eighth of the value of the resource spent if direct control is used at full intensity and no word-of-mouth control is used throughout the campaign duration (for $\hat b_m=\hat c_m = 1,~\forall m$). This corresponds to the scarce resource case, which is likely to be encountered in most real situations. We are not interested in cases where the resource is so abundant that maximum control strength can be applied throughout the campaign duration. All the computations are carried out by discretizing the campaign duration into 51 time points.

\emph{Model Validation:} The comparison between system evolutions, captured by $i(t)=\sum_{k\in\mathbb K}i_k(t)p_k$, produced by the SI epidemic process simulated on configuration model networks, and that produced by differential equations from degree based compartmental models are shown in Fig. \ref{fig:sim_vs_theory_SI_config_model}. The simulation results are averaged over 20 runs on networks of size $10^4$ for all three types of networks---ER, PL3 and PL2. We see an excellent match between simulation and degree based compartmental model. To construct a configuration model network, sample the degree of all $10^4$ nodes from $p_k,~k\in \mathbb K$. Then select a half edge of any node and pair it with any other half edge available in the network. Self and multiple loops can occur, but their density goes to zero for large networks \cite{newman2010networks}. Also, if the last half edge is left unpaired, it is ignored.

\subsection{Timing the Incentives and Important Degree Classes}
\label{sec:results_timing_incen_imp_deg}

\subsubsection{Constant Spreading Rate, $\beta(t)=\beta$}
\label{sec:results_constant_beta}

For this result, the degree classes are divided into three groups ($M=3$), each targeted by a direct and a word-of-mouth control. For a given network, nodes in degree classes $K_{min}$ to $\tilde k_1$ are assigned to the first group, those in degree classes $\tilde k_1 +1$ to $\tilde k_2$ are assigned to the second group and $\tilde k_2+1$ to $K_{max}$ are assigned to the third group, such that $\sum_{k=K_{min}}^{\tilde k_1} p_k \approx \sum_{k=\tilde k_1+1}^{\tilde k_2} p_k \approx \sum_{k=\tilde k_2+1}^{K_{max}} p_k \approx 1/3$. Thus, the three groups are selected such that roughly one-third of the nodes fall in each of the groups for all the three networks. This approach of creating the three groups remains the same in the rest of the paper. This is just an example of forming groups that is used to demonstrate results in this paper. Our method works for any disjoint collection of degree classes grouped to form $\mathbb{K}_m$'s. The degree classes are grouped only for the purpose of controlling the network, information diffusion dynamics is still captured with $|\mathbb K|$ equations, as suggested by Eq. (\ref{eq:opt_prob_states}). Following this approach, groups in the ER network have degree classes $\mathbb K_1^{ER}=\{1,...,21\}$, $\mathbb K_2^{ER}=\{22,...,25\}$, $\mathbb K_3^{ER}=\{26,...,60\}$; PL3 network: $\mathbb K_1^{PL3}=$\{13,14,15\}, $\mathbb K_2^{PL3}=$\{16,...,21\}, $\mathbb K_3^{PL3}=$\{22,...,300\}; and PL2 network: $\mathbb K_1^{PL2}=$\{6,7,8\}, $\mathbb K_2^{PL2}=$\{9,...,15\}, $\mathbb K_3^{PL2}=$\{16,...,300\}. 

Shapes of the direct and word-of-mouth optimal control signals for ER, PL3 and PL2 networks and the corresponding rates at which resource is allocated in each of the three groups are shown in Fig. \ref{fig:controls_const_beta}. To plot the resource allocation rate in group $m$, we have plotted $g_mb_m(u_m(t))$ and $\alpha v_m(t) \beta(t) c_m(v_m(t))\bar i_m(t) \bar s(t)$ over time for direct and word-of-mouth controls (as derived in (\ref{eq:instantaneous_resource_rate_wom})). The word-of-mouth control is representative of the cash-back or discount on membership renewals, announced for present customers if they introduce a friend to the service/product. The corresponding resource allocation rate is akin to the rate at which money is spent due to announcement of such cash-backs. As seen from the figure, for all three networks, both direct and word-of-mouth controls have larger strength at the early stages of the epidemic and they gradually decay. This is so because for the SI epidemic, early infection aids in faster propagation of the epidemic, 
hence the reward for an intense control, in-spite of the higher costs, is worthwhile. 

As seen from the resource allocation rate plots (Figs. \ref{fig:resource_ER_const_beta}, \ref{fig:resource_PL3_const_beta}, \ref{fig:resource_PL2_const_beta}), when all the degree classes are equally costly to influence, for PL3 and PL2 networks (which are more heterogeneous than the ER network), the highest group ($m=3$) is most important in the optimal strategy (which assigns highest resource for both types of controls to them; at least at the beginning stages in the case of direct controls, and always for the word-of-mouth controls). For the ER network, the medium group ($m=2$) attracts the most investment, which makes them the most important group for information spreading (more important than even the highest group, which is somewhat surprising). The group with highest degree classes is next most important, followed by the group with lowest degrees classes, which is as expected. The percentages of total (direct plus word-of-mouth) resource taken up by low, medium and high groups in the case of ER are 21\%, 47\%, 32\%; for PL3: 8\%, 29\%, 63\%; and PL2: 5\%, 17\%, 78\% respectively.

\emph{Explanation of this (`surprising') bahavior:} Information diffuses through edges in the network. We calculate the mean degree of nodes in the given network and given degree class group using $\bar k_m=\sum_{k\in \mathbb K_m}(kp_k)/\sum_{k\in \mathbb K_m}p_k$. The values for ER are $\bar k_1^{ER}=18.3,~\bar k_2^{ER}=23.5,~\bar k_3^{ER}=28.9$. For PL3: $\bar k_1^{PL3}=13.9,~\bar k_2^{PL3}=18.0,~\bar k_3^{PL3}=40.1$. And, for PL2: $\bar k_1^{PL2}=6.8,~\bar k_2^{PL2}=11.3,~\bar k_3^{PL2}=48.5$. Thus we see that the highest groups in heterogeneous networks (PL2 and PL3) have disproportionate advantage in terms of the number of links/edges to other nodes. Although this aids in both spreading and receiving information, targeting low/medium group nodes early on in the campaign is not useful because a randomly chosen node in them will most likely be a part of a collection of nodes connected to other low/medium group nodes and is most likely away from a hub. In PL2/PL3 networks, hubs (high degree nodes) are connected to low/medium degree nodes, but the rest of low/medium group nodes connect among themselves and the information penetrates slowly there. Later on in the campaign when we have exhausted high group nodes, it makes sense to target medium and low group nodes.

On the other hand, connectivity is more uniform in the ER network. Nodes in the highest group will be able to get information from medium and lower group nodes. In addition, directly targeting medium and lower groups increases the fraction of infected nodes in these groups, thus having a direct effect on increase in the objective (reward) functional (net fraction of infected nodes at the deadline).

The resource allocation rate plots (Figs. \ref{fig:resource_ER_const_beta}, \ref{fig:resource_PL3_const_beta}, \ref{fig:resource_PL2_const_beta}) also show that: (i) The word-of-mouth strategy assumes increasing importance with network heterogeneity (PL2$>$PL3$>$ER). The percentages of total resource allocated to word-of-mouth strategies (all 3 groups combined) for PL2, PL3 and ER networks are 44\%, 28\% and 19\% respectively. The more heterogeneous the network, the more hubs it has, thus more word-of-mouth resource is allocated to the highest group to infect the nodes attached to the hubs (see the curves corresponding to $m=3$ in the figures for PL2, PL3 networks). The ER network does not have a group with distinct advantage in terms of degrees, so the word-of-mouth resource allocation is low. (ii) The more heterogeneous the network, the more disparity there is in the allocated resources for high, medium and low groups for both control strategies. Thus, high degree nodes in more heterogeneous networks have more relative importance than their counterparts in less heterogeneous networks.

To achieve the above resource allocation, the optimal strategy uses controls as shown in Figs. \ref{fig:control_ER_const_beta}, \ref{fig:control_PL3_const_beta} and \ref{fig:control_PL2_const_beta}. The direct controls follow the same trend as the corresponding resource allocation rate curves; however, the word-of-mouth controls show different trends. For the ER network, more word-of-mouth rewards (the controls in form of cash-backs, for example) are announced for the medium group, followed by the low and then the highest group. For the heterogeneous networks, highest reward is announced in medium group, followed by high and low groups. Since the highest group has more nodes around it, the resource spent due to the group's cash-back claims is more than that of the other groups in the heterogeneous network. More rewards are announced in the medium group to incentivize their members for strong spreading to their neighbors, as there are fewer neighbors in the group.

Table \ref{table:time_taken} gives an idea of the time taken to solve the optimal control problem for various networks for the parameter values chosen in Sec. \ref{sec:defalut_parameters}. Note that this time is dependent on the values chosen for the parameters (which is true for any numerically solved optimization problem).

\begin{table}[ht]
\caption{\small{Time taken (in sec, on a machine with Intel quard-core i3-2100 CPU @ 3.10 GHz, 4 GB RAM, and running a 64 bit operating system) to solve Problem (\ref{eq:opt_prob}) for various networks for different values of $M$ (averaged over 10 runs) when $[0,T]$ is discretized to $N=51$ time points.}}
\label{table:time_taken}
\centering
\begin{tabular}{l l l l l l l}
 & $M=1$ & $M=2$ & $M=3$ & $M=4$ & $M=5$ & $M=10$ \\
\hline
ER & 8.50 & 45.96 & 77.99 & 185.68 & 284.38 & 1002.75 \\
PL3 & 12.25 & 85.52 & 143.33 & 244.34 & 341.58 & 1217.80 \\
PL2 & 12.36 & 94.64 & 169.42 & 251.97 & 395.72 & 1222.33 \\
\hline
\end{tabular}
\label{table:definition_parameters}
\end{table}

\begin{figure*}[ht!]
\centering
\subfloat[Controls, ER, $\beta(t)=\beta,~\forall t$. \label{fig:control_ER_const_beta}]{
\includegraphics[width=53mm]{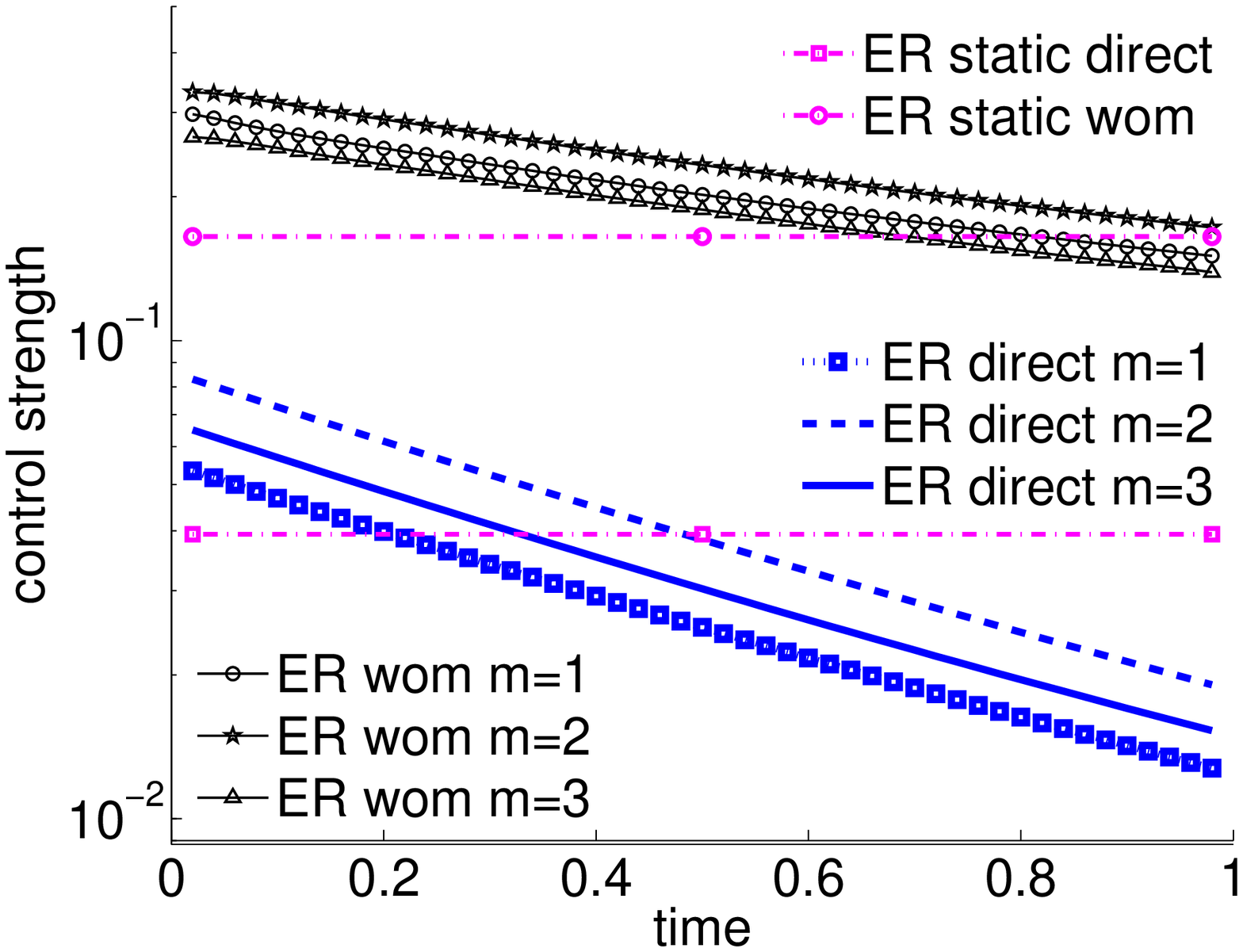} }
\hfill
\subfloat[Controls, PL3, $\beta(t)=\beta,~\forall t$. \label{fig:control_PL3_const_beta}]{
\includegraphics[width=53mm]{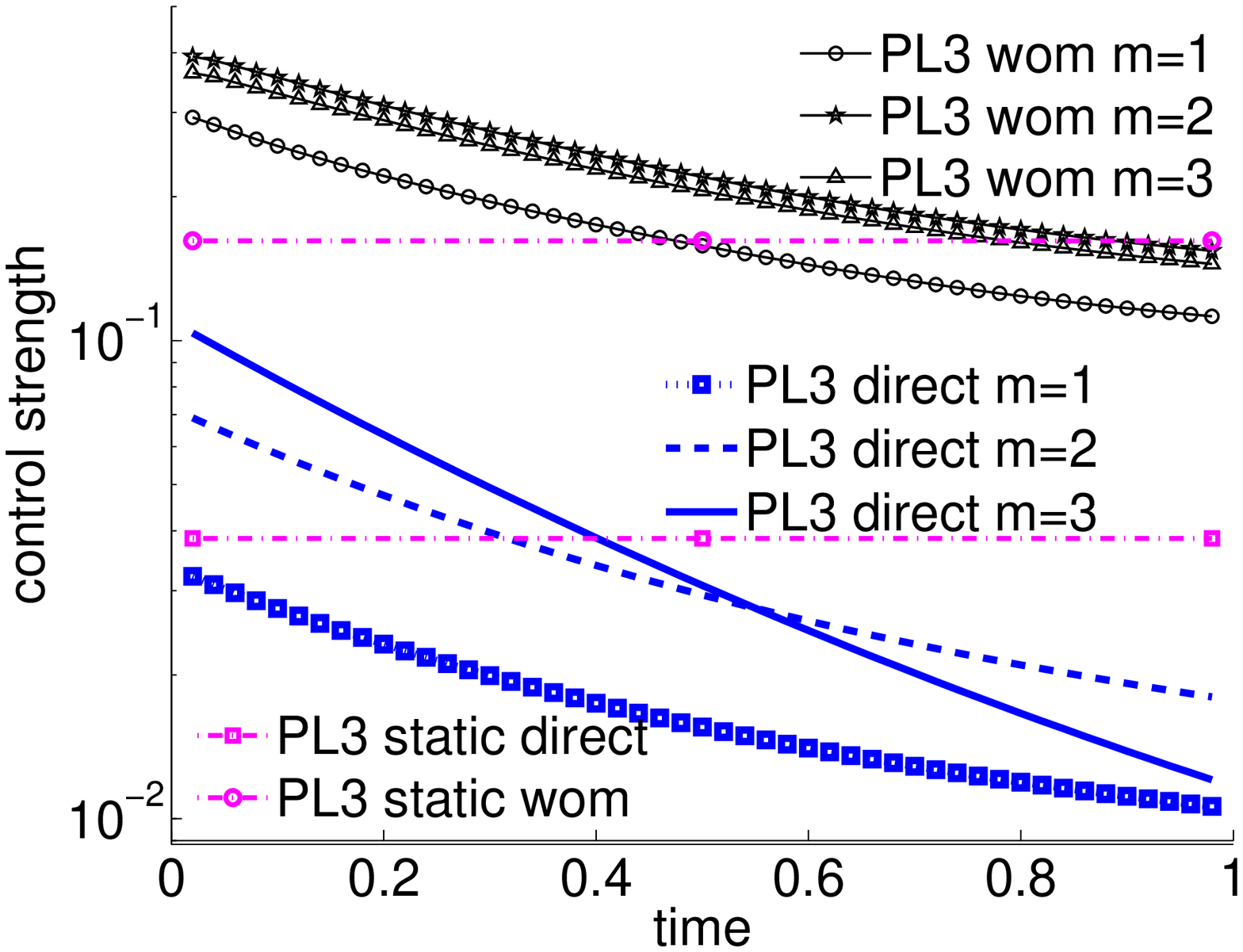} }
\hfill
\subfloat[Controls, PL2, $\beta(t)=\beta,~\forall t$. \label{fig:control_PL2_const_beta}]{
\includegraphics[width=53mm]{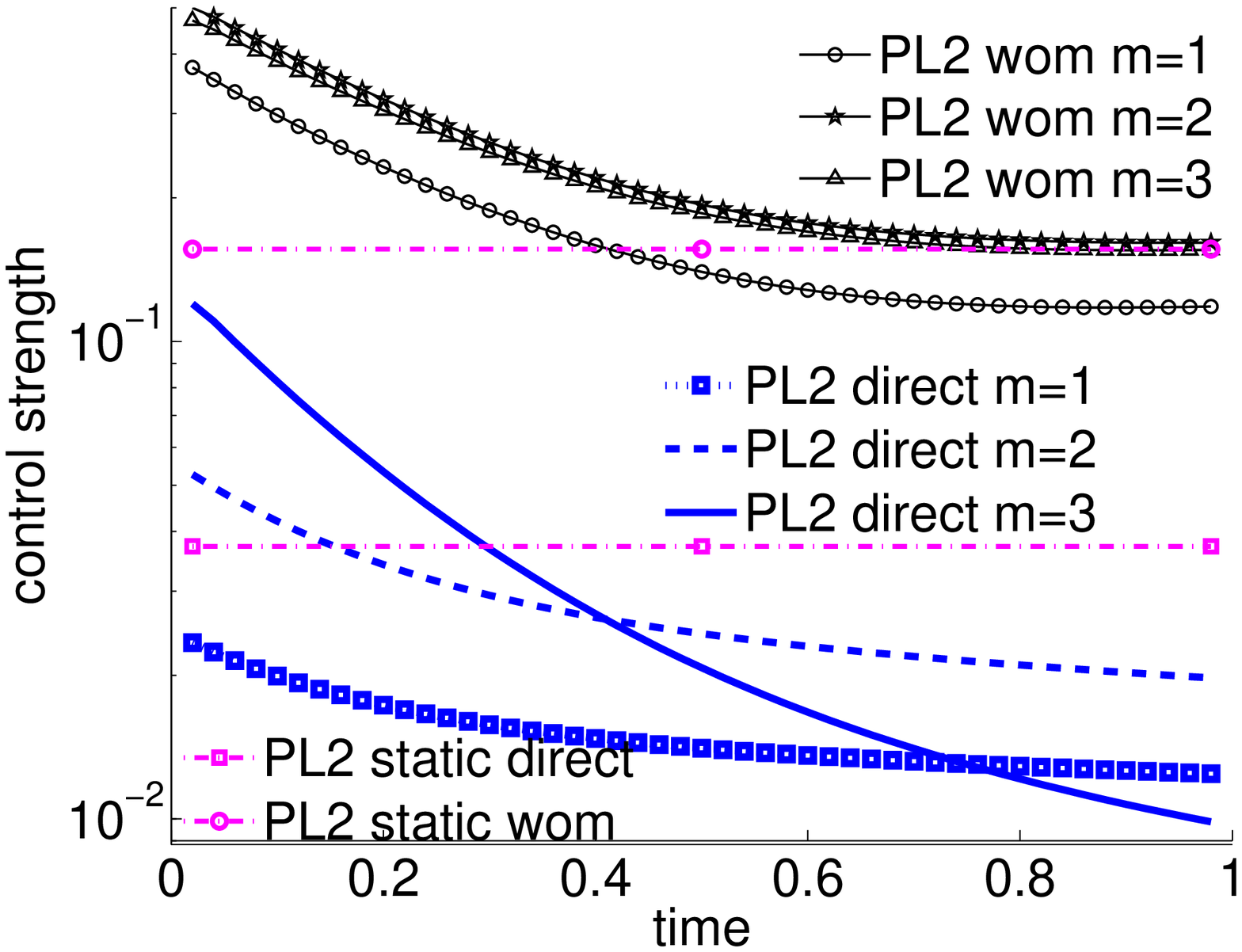} }
\hfill
\subfloat[Resource allocation, ER, $\beta(t)=\beta,~\forall t$. \label{fig:resource_ER_const_beta}]{
\includegraphics[width=53mm]{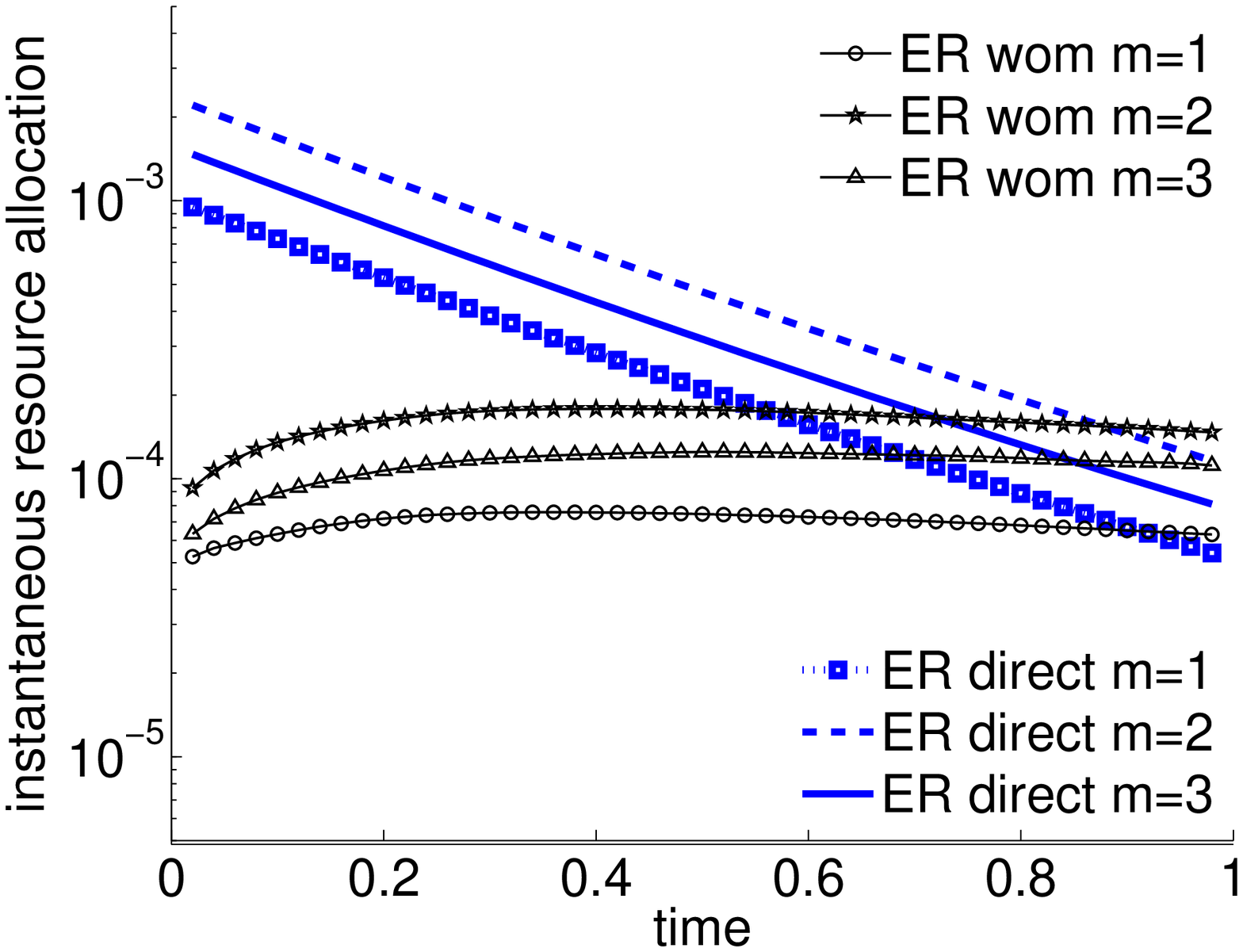} }
\hfill
\subfloat[Resource allocation, PL3, $\beta(t)=\beta,~\forall t$. \label{fig:resource_PL3_const_beta}]{
\includegraphics[width=53mm]{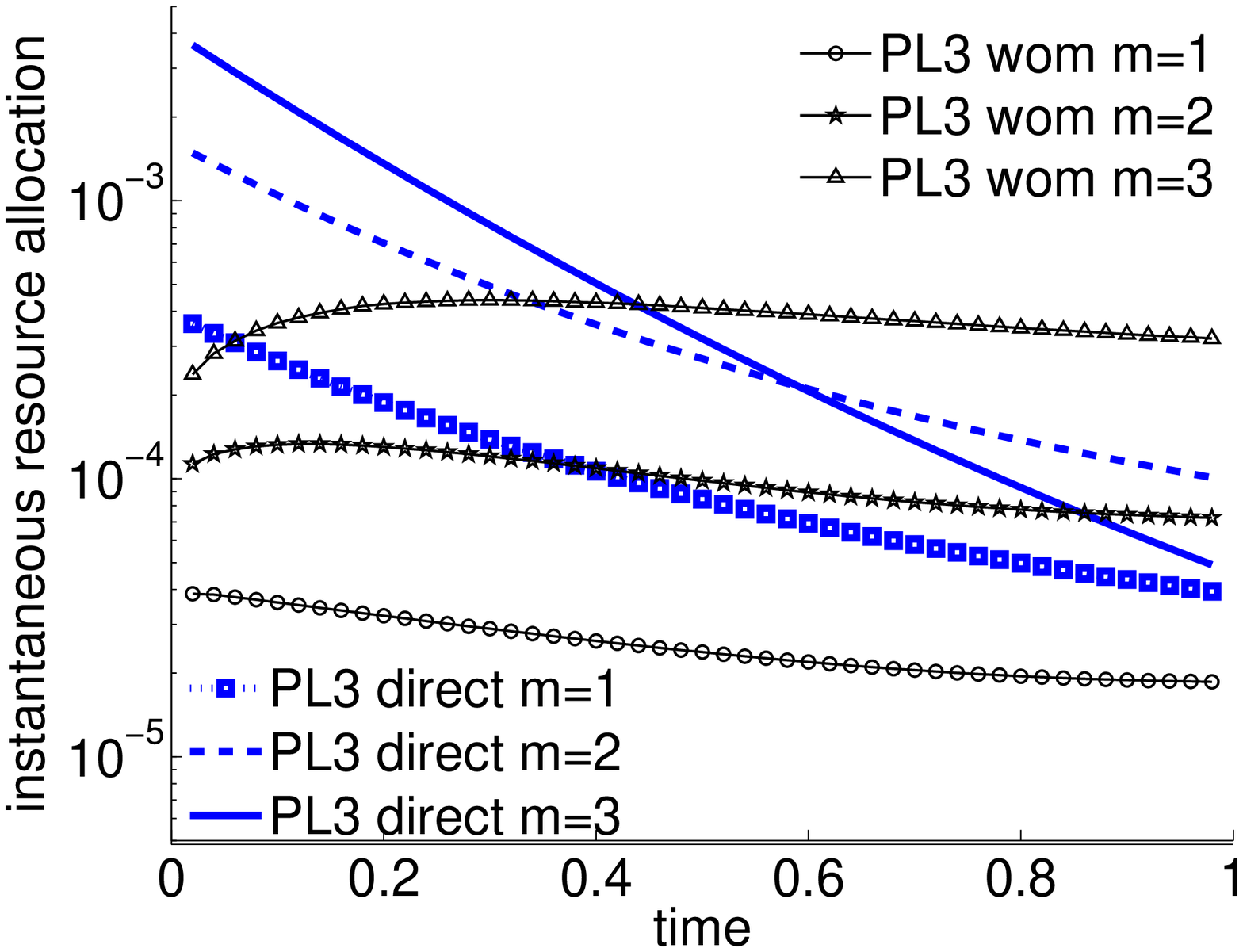} }
\hfill
\subfloat[Resource allocation, PL2, $\beta(t)=\beta,~\forall t$. \label{fig:resource_PL2_const_beta}]{
\includegraphics[width=53mm]{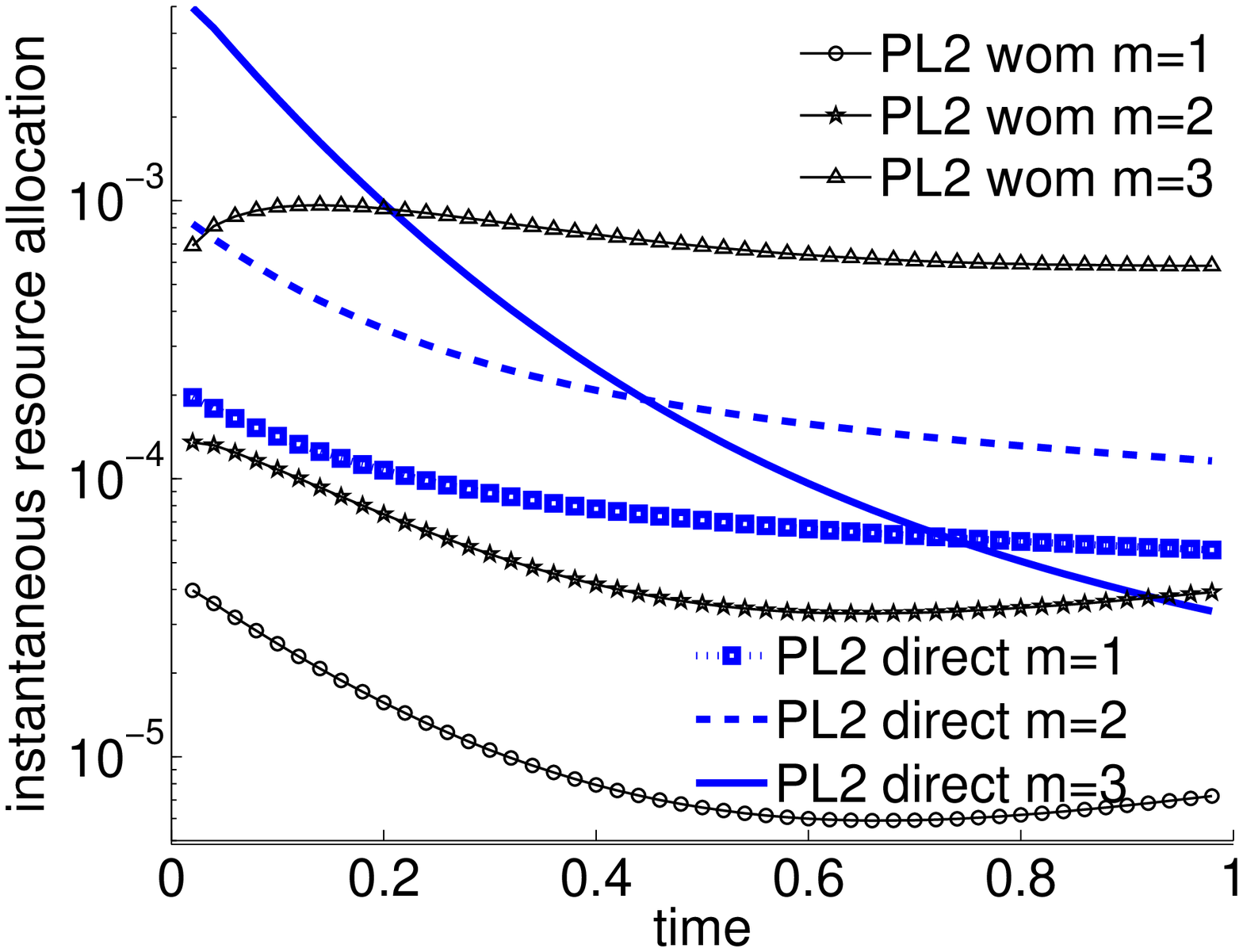} }
\caption{\small{Controls and resource allocation rate for $M=3$ for the case of constant spreading rate. Parameter values: $T=1$, $\beta=0.12$, $i_0=0.01$, $\alpha=0.5$, $u_{max}=0.12$, $v_{max}=0.5$, $\hat b_m=\hat c_m=1~\forall m$, $B = u_{max}^2T/8$, $d = 0.5$.}}
\label{fig:controls_const_beta}
\end{figure*}

\begin{figure*}[ht!]
\centering
\subfloat[Controls, PL3, $\downarrow \beta_1(t)$. \label{fig:control_PL3_decreasing_beta}]{
\includegraphics[width=53mm]{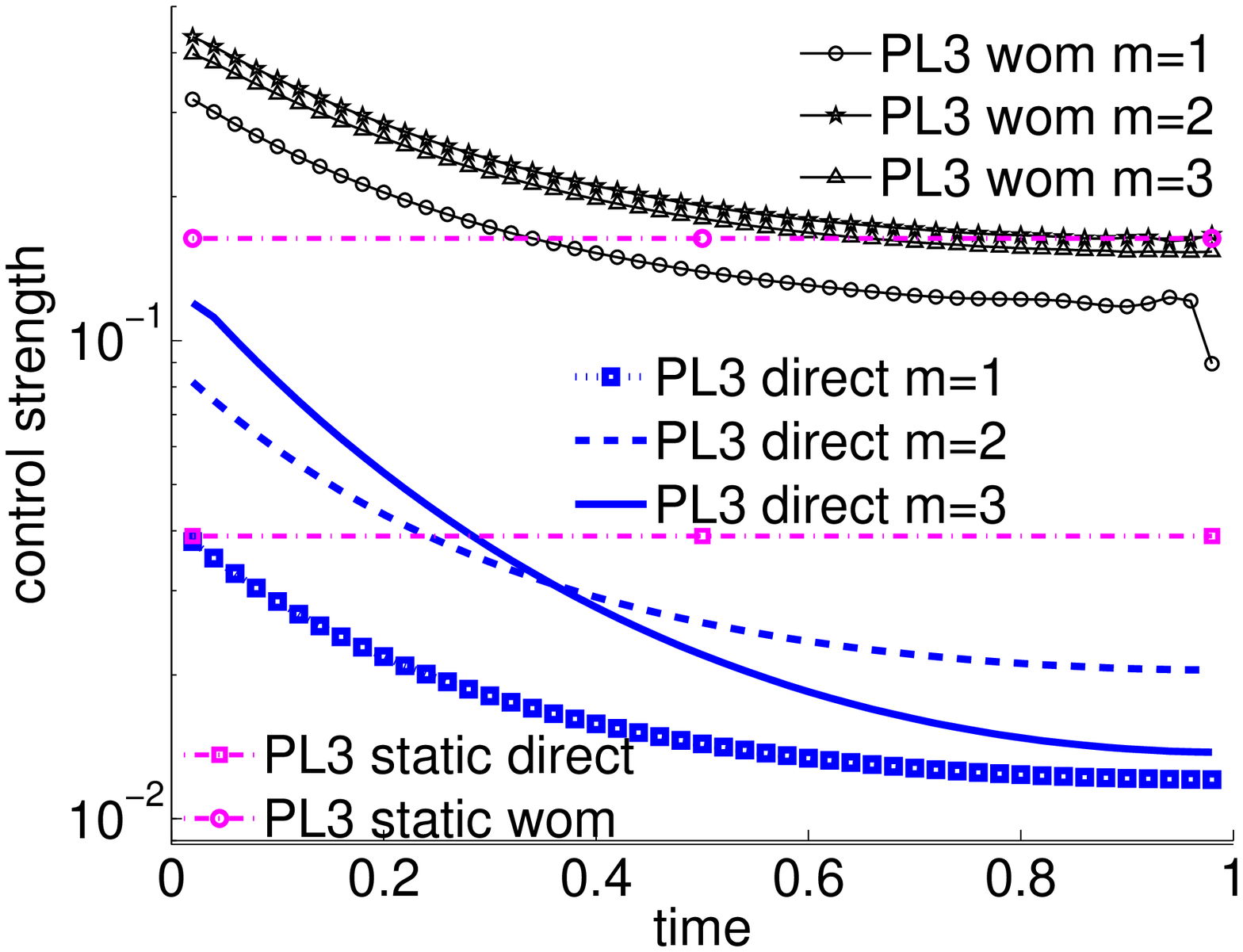} }
\hfill
\subfloat[Controls, PL3, $\uparrow \beta_2(t)$. \label{fig:control_PL3_increasing_beta}]{
\includegraphics[width=53mm]{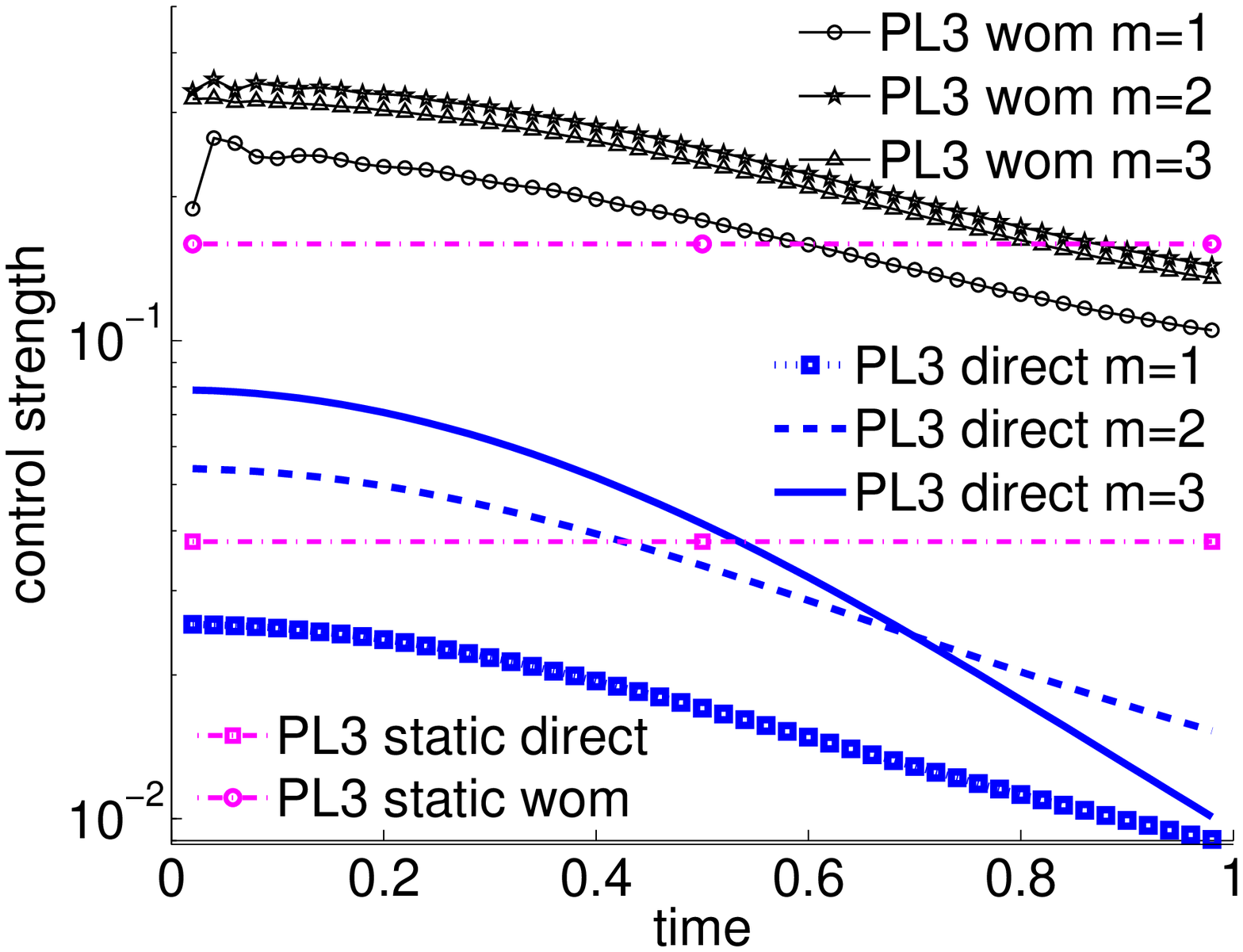} }
\hfill
\subfloat[Resource allocation, PL3, $\uparrow \beta_2(t)$. \label{fig:resource_PL3_increasing_beta}]{
\includegraphics[width=53mm]{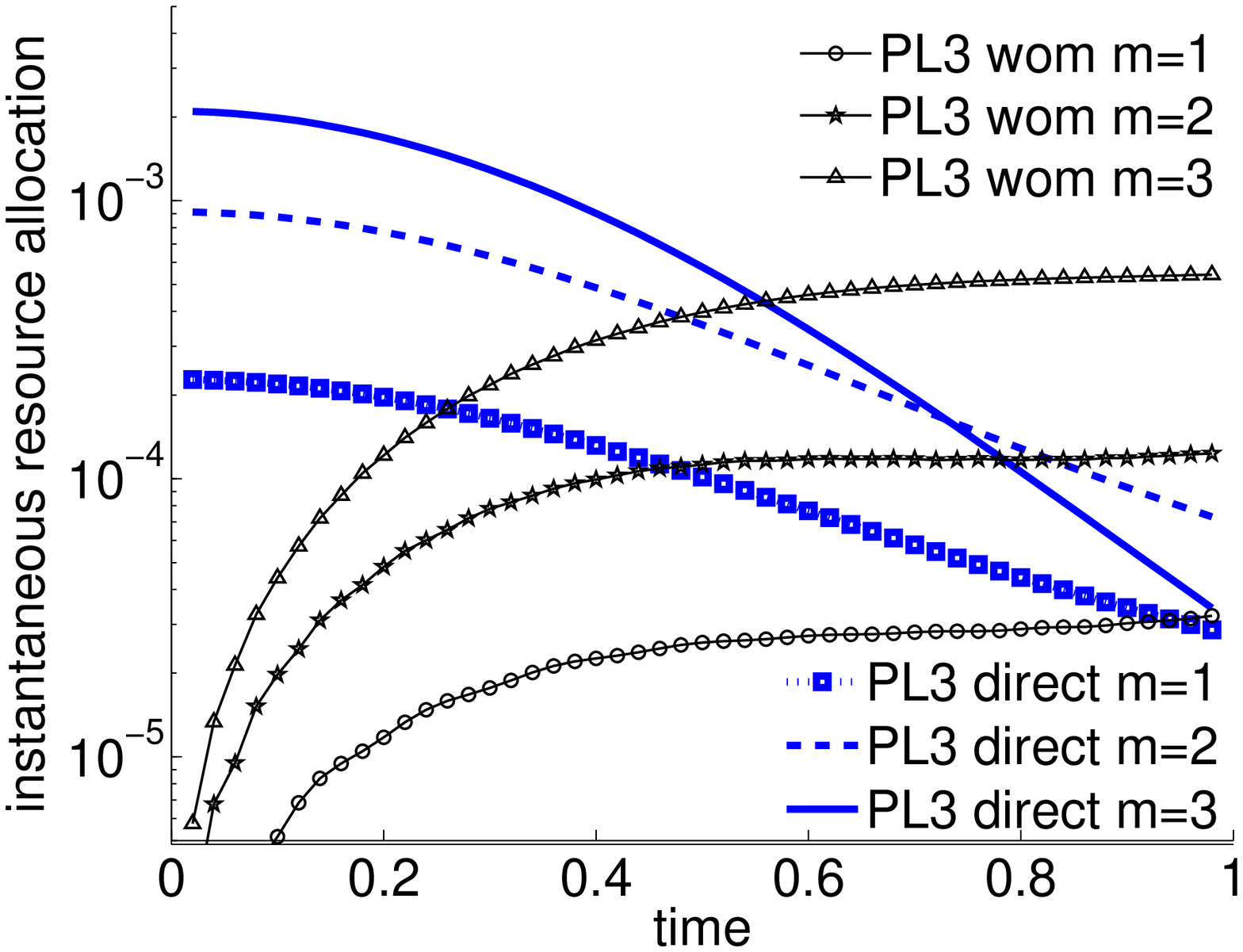} }
\caption{\small{Resource allocation rate for $M=3$ for the case of time varying spreading rate. Parameter values: $T=1$, $i_0=0.01$, $\alpha=0.5$, $u_{max}=0.12$, $v_{max}=0.5$, $\hat b_m=\hat c_m=1~\forall m$, $B = u_{max}^2T/8$, $d = 0.5$.}}
\label{fig:controls_varying_beta}
\end{figure*}

\subsubsection{Time Varying Spreading Rate}

To capture the varying interest level of the population in talking about the subject of the campaign, we have modeled the spreading rate as a time varying quantity. In this section we demonstrate the effect of time-varying spreading rate on the optimal control strategy. We have chosen linearly decreasing and linearly increasing time varying $\beta(t)$ for illustration, defined as:
\begin{eqnarray}
\beta_1(t) =& \beta_M\times (1-t/T), & 0\leq t\leq T, \nonumber \\
\beta_2(t) =& \beta_M\times t/T, & 0\leq t\leq T. \nonumber
\end{eqnarray}
A decreasing $\beta_1(t)$ may be encountered in product promotion campaigns (\emph{e.g.} smartphones), where the interest of the population in the version of the product decreases as it grows old. An increasing $\beta_2(t)$, on the other hand, may be encountered in cases such as poll campaigns, where people are more and more interested in the subject as the polling day approaches.

The controls for the PL3 network for decreasing $\beta_1(t)$ are shown in Fig. \ref{fig:control_PL3_decreasing_beta}, and controls and resource allocation rate curves for increasing $\beta_2(t)$ are shown in Figs. \ref{fig:control_PL3_increasing_beta} and \ref{fig:resource_PL3_increasing_beta} respectively. Here, $\beta_M=0.24$ and degree classes are divided into three groups as in Sec. \ref{sec:results_constant_beta}. The graphs for other cases showed similar trends and are omitted for brevity. Points to note are: 

(i) The word-of-mouth and direct controls are always strong in the beginning stages of the campaign (in-spite of the high costs they incur due to their strength) for both decreasing and increasing spreading rate profiles. More infected nodes at beginning of the campaign leads to more information dissemination in the case of SI process (where there is no recovery like SIR). The word-of-mouth resource allocation (in Fig. \ref{fig:resource_PL3_increasing_beta}) follows the trend, because of slow conversion of susceptibles to infected due to word-of-mouth strategy early on in the campaign (because of low values of $\beta_2(t)$ in early stages). 

(ii) The importance of the degree class groups in spreading the epidemic is the same as in Sec. \ref{sec:results_constant_beta} for ER and PL2/PL3 networks. Thus, the earlier result is reconfirmed, that is, the order of importance for the ER network is: medium degrees $>$ high degrees $>$ low degrees, and that for the PL2/PL3 network is: high degrees $>$ medium degrees $>$ low degrees (omitted graphs showed the same trend).

\subsection {Effect of the Budget}

\begin{figure*}[ht!]
\centering
\subfloat[ER \label{fig:J_vs_B_d_wom_ER}]{
\includegraphics[width=53mm]{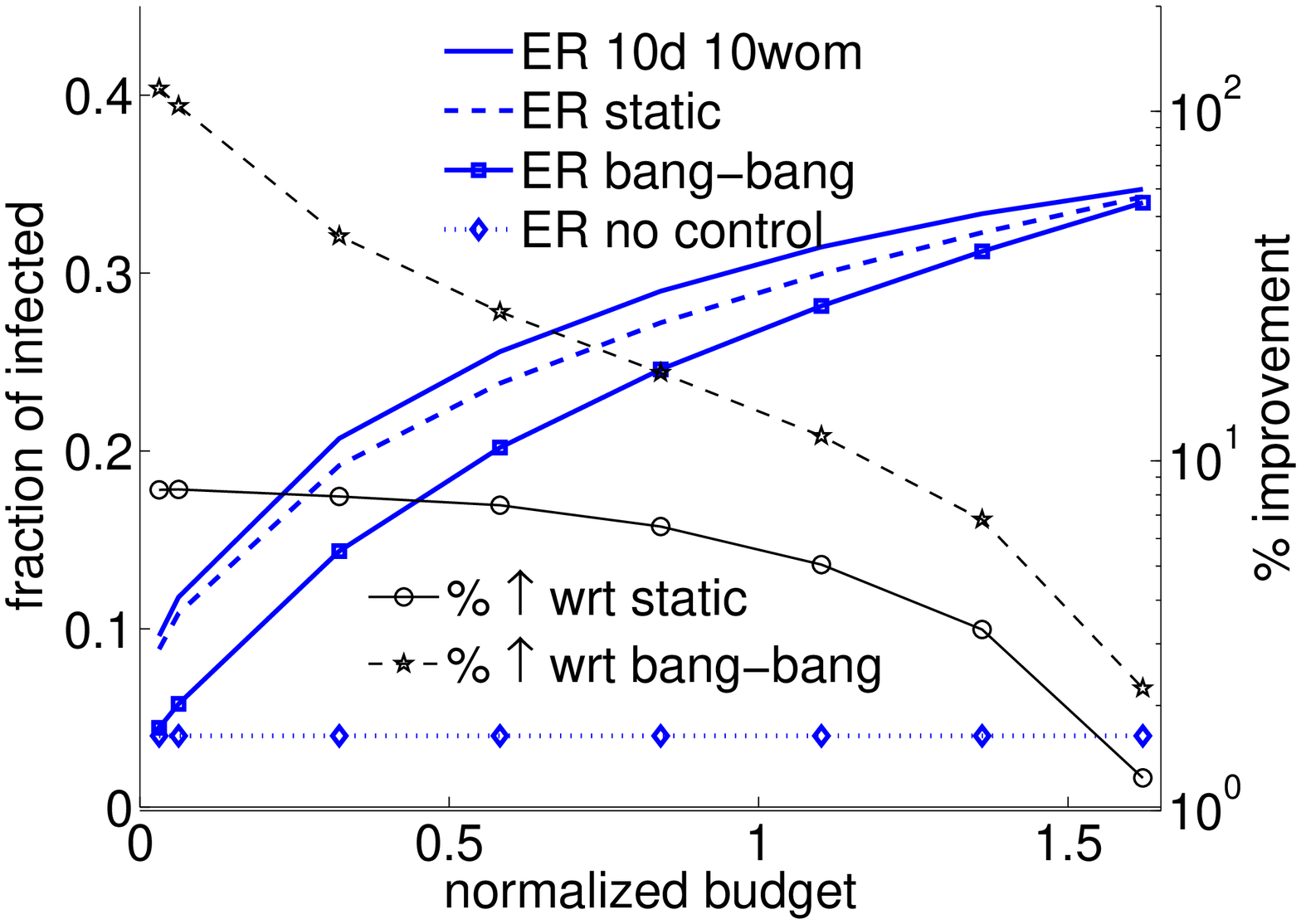} }
\hfill
\subfloat[PL3 \label{fig:J_vs_B_d_wom_PL3}]{
\includegraphics[width=53mm]{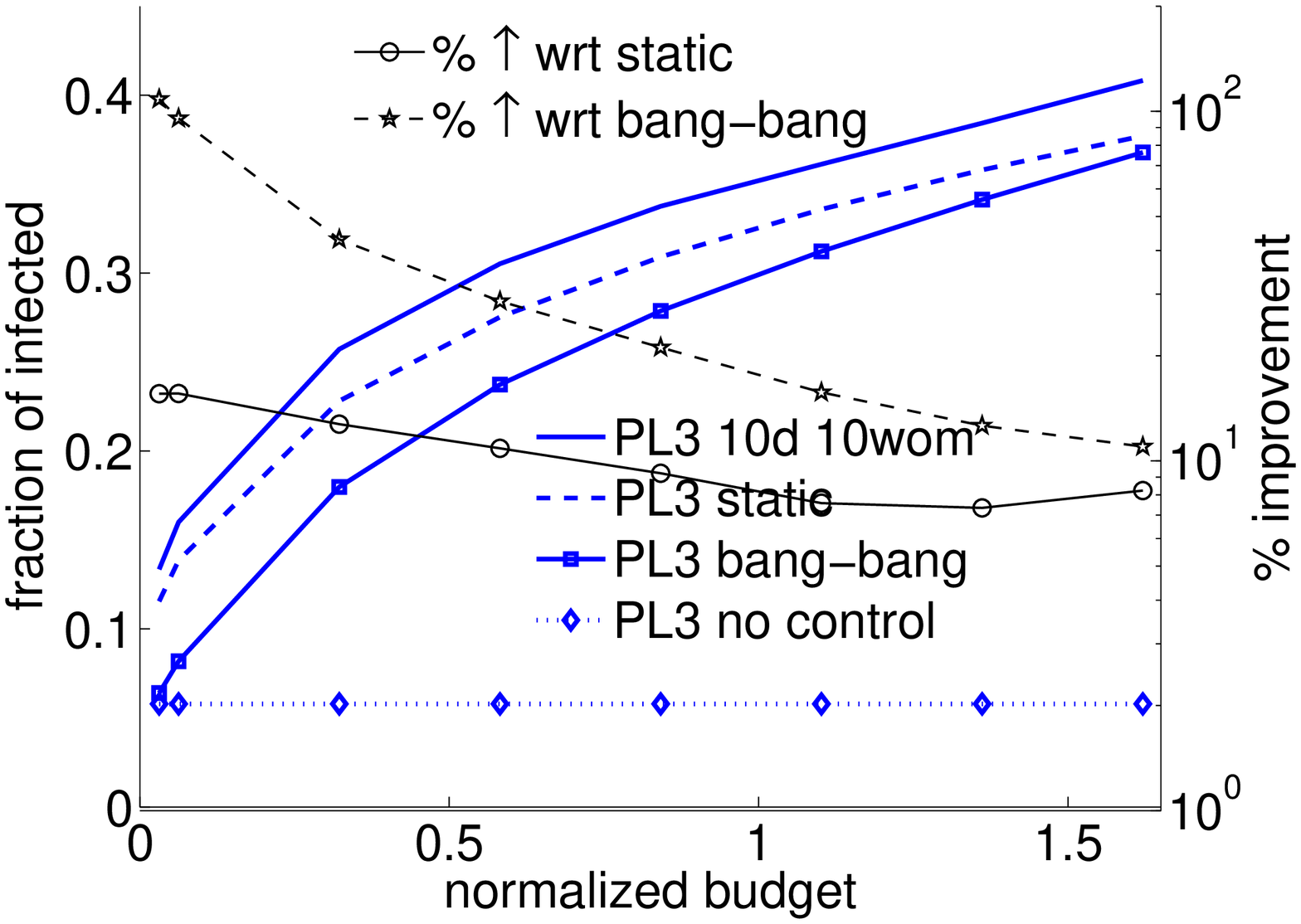} }
\hfill
\subfloat[PL2 \label{fig:J_vs_B_d_wom_PL2}]{
\includegraphics[width=53mm]{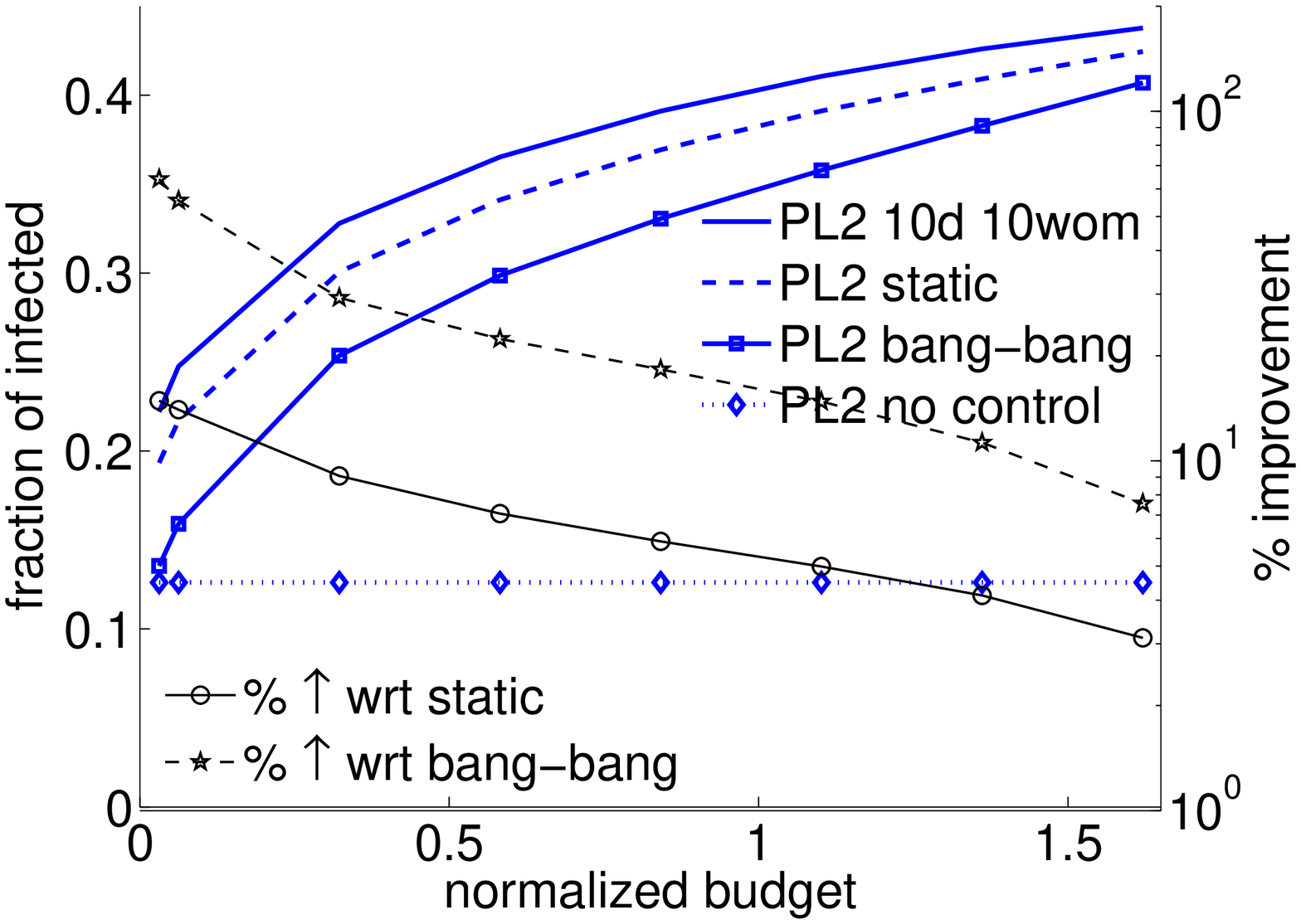} }
\caption{\small{$J$ vs. normalized budget $B$/($u_{max}^2T$), and percentage improvement (right Y-axis) with respect to static and bang-bang control strategies. Parameter values: $T=1$, $i_0=0.01$, $\alpha=0.5$, $\beta=0.12$, $u_{max}=0.12$, $v_{max}=0.5$, $\hat b_m=\hat c_m=1~\forall m$, $d = 0.5$, $M=10$.}}
\label{fig:J_vs_B_d_wom}
\end{figure*}

\begin{figure*}[ht!]
\centering
\subfloat[ER \label{fig:J_vs_B_d_wom_d_ER}]{
\includegraphics[width=53mm]{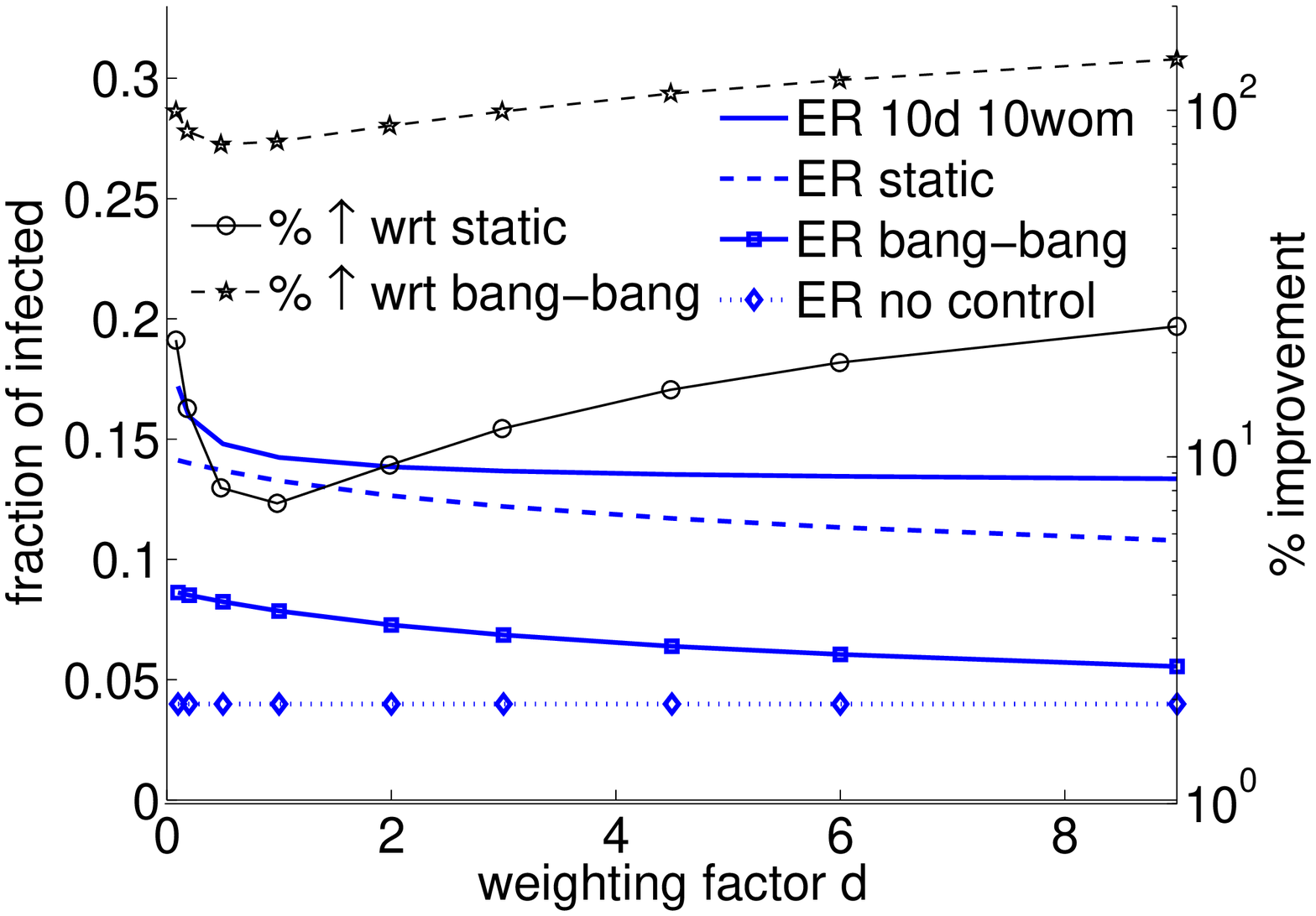} }
\hfill
\subfloat[PL3 \label{fig:J_vs_B_d_wom_d_PL3}]{
\includegraphics[width=53mm]{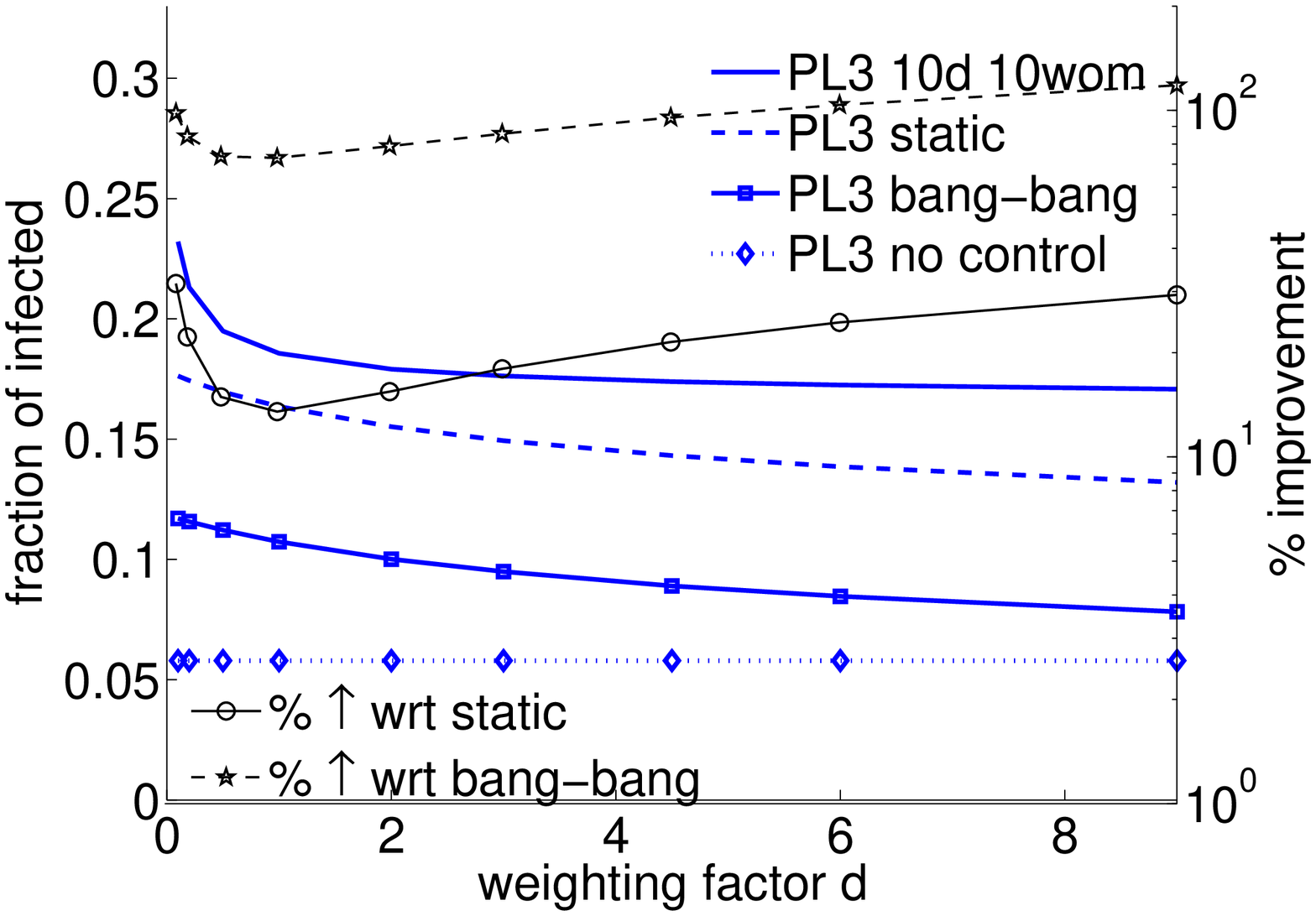} }
\hfill
\subfloat[PL2 \label{fig:J_vs_B_d_wom_d_PL2}]{
\includegraphics[width=53mm]{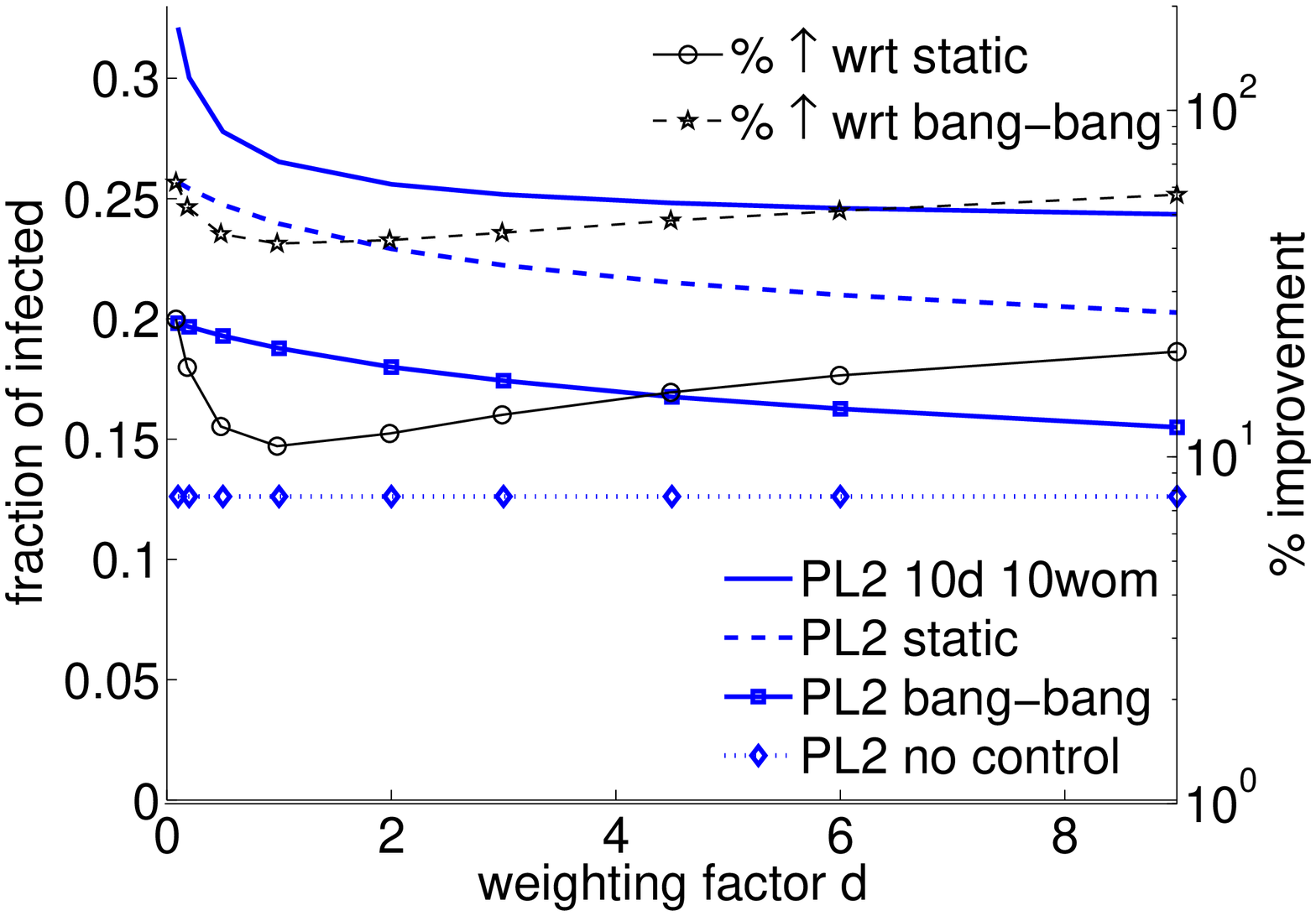} }
\caption{\small{$J$ vs. weighting factor $d$, and percentage improvement (right Y-axis) with respect to static and bang-bang control strategies. Parameter values: $T=1$, $i_0=0.01$, $\alpha=0.5$, $\beta=0.12$, $u_{max}=0.12$, $v_{max}=0.5$, $B=u_{max}^2T/8$, $\hat b_m=\hat c_m=1~\forall m$, $M=10$.}}
\label{fig:J_vs_B_d_wom_d}
\end{figure*}

In the rest of this paper we assume a constant spreading rate. Fig. \ref{fig:J_vs_B_d_wom} studies the effect of the budget on the reward functional $J$, for $M=10$, for all three networks and compares it with static and bang-bang strategies. To divide the degree classes into $M=10$ groups, we use the approach similar to that in Sec. \ref{sec:results_timing_incen_imp_deg}, where degrees were sequentially allotted to the groups, with each group having about one-tenth of the total fraction of the nodes. This approach of creating ten groups remains the same for the rest of the paper. 

From the figures we conclude that (for the parameter values used) the heterogeneous networks (PL2 and PL3) benefit more from the application of the optimal control strategy compared to the ER network, with respect to the static control strategy (concluded from the percentage improvement data in Fig. \ref{fig:J_vs_B_d_wom}). The trend is also roughly true for bang-bang control. The benefit for heterogeneous networks is more because we are still in the regime of low infection and there are untapped susceptible nodes closer to hubs which will benefit from increasing the resource. This does not happen for the case of the ER network. Increase in resource benefits the optimal strategy more than the non-optimal strategies, leading to this trend.

Also, the performance benefits are more for the low and intermediate budgets than for the high budgets. This is expected because at higher budget values, the optimal, static and bang-bang controls are increasingly similar in shape and all of them tends to saturate to the maximum allowed value.

From this figure (and results in the rest of the paper) we see that static strategy outperforms the bang-bang strategy. Although the bang-bang strategy puts the control effort at the most productive stage (beginning of the campaign), the strong intensity incurs more cost (due to the convex cost structure assumed in this section). Compared to the static strategy, the bang-bang strategy is able to exert less total control effort for the same value of budget. It turns out that this trade-off (productive time vs. total effort) is working in favor of the static strategy.

\subsection{Effect of Changing the Relative Cost of Word-of-mouth Controls}

The effect of varying the relative cost of using word-of-mouth control, which is captured by parameter $d$ in the function $c_m(v(t))=d\hat c_mv^2(t)$, is studied in Fig. \ref{fig:J_vs_B_d_wom_d}. Results show that increase in the price of word-of-mouth controls hurts the PL2 network more than the ER network (as suggested by the relative improvement over non-optimal strategies data). The optimal strategy tries to adjust for high costs by allocating more resource to direct controls, so the advantage gained over non-optimal strategies has an increasing trend, but doing so hurts the PL2 network, which uses the word-of-mouth strategy more extensively than the ER network (seen in Sec. \ref{sec:results_timing_incen_imp_deg}). Hence the percentage improvement curve for PL2 is lower than that for ER.

\subsection{Effect of the Spreading Rate}

\begin{figure*}[ht!]
\centering
\subfloat[ER \label{fig:J_vs_beta_d_wom_ER}]{
\includegraphics[width=53mm]{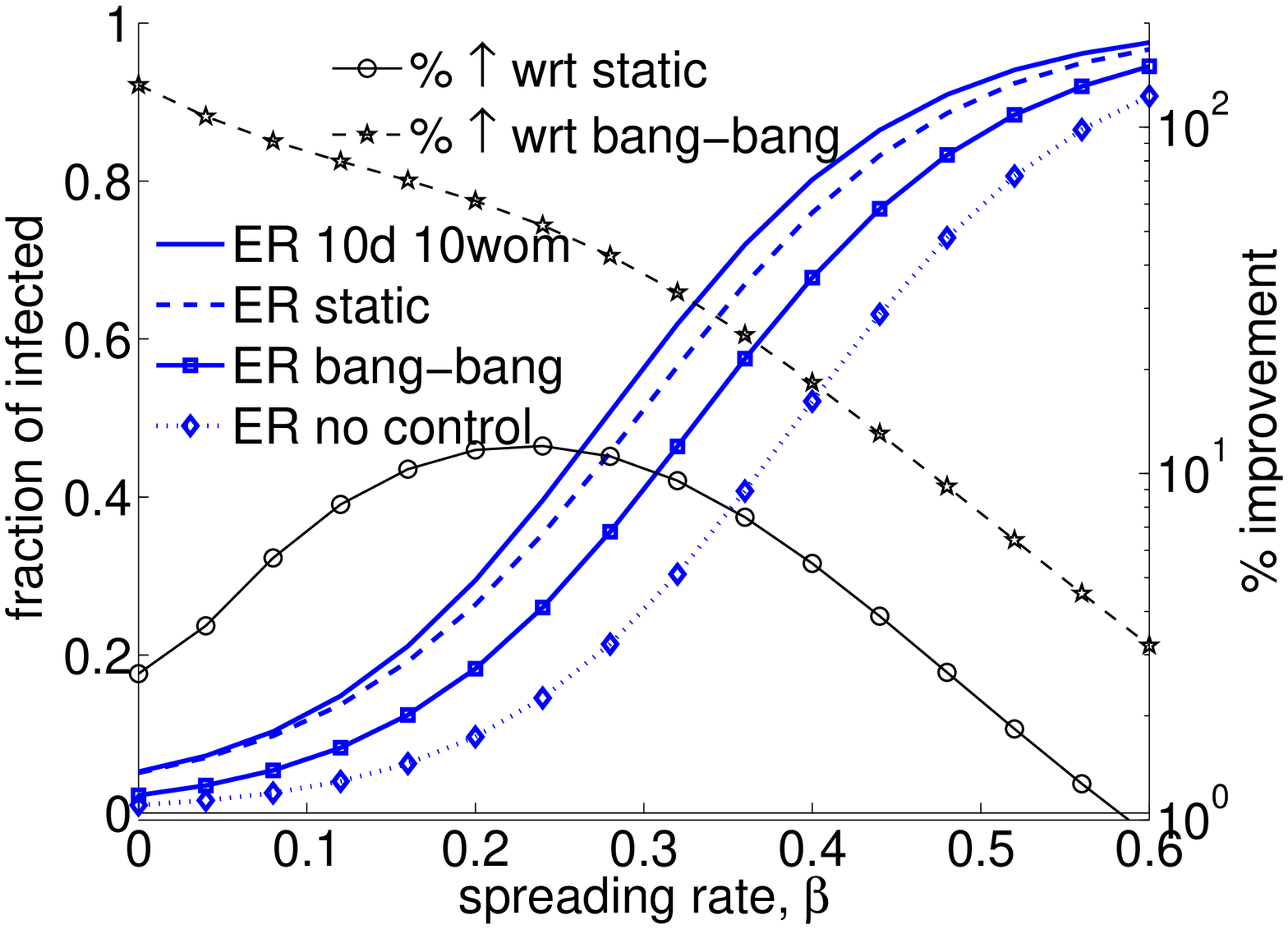} }
\hfill
\subfloat[PL3 \label{fig:J_vs_beta_d_wom_PL3}]{
\includegraphics[width=53mm]{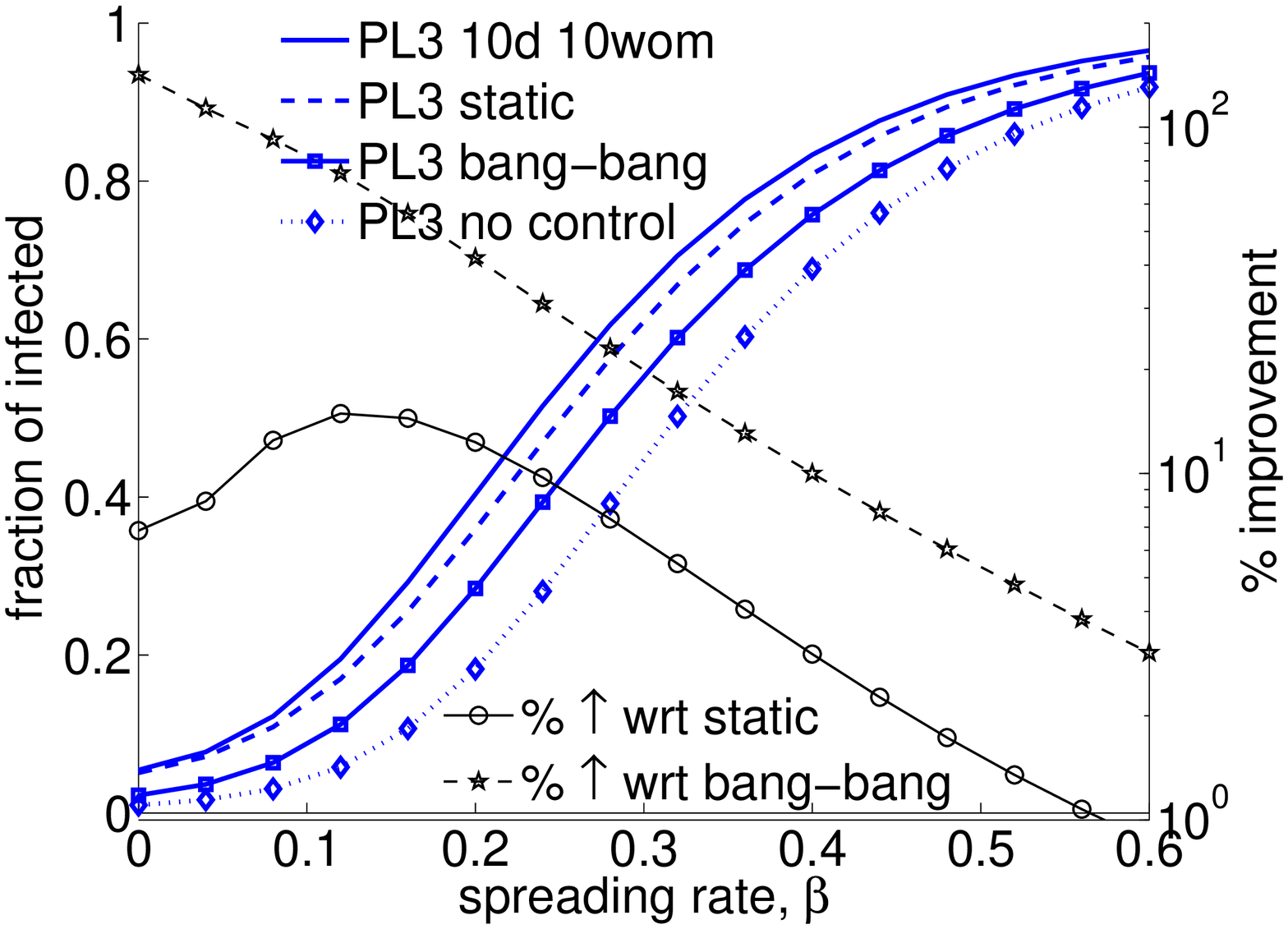} }
\hfill
\subfloat[PL2 \label{fig:J_vs_beta_d_wom_PL2}]{
\includegraphics[width=53mm]{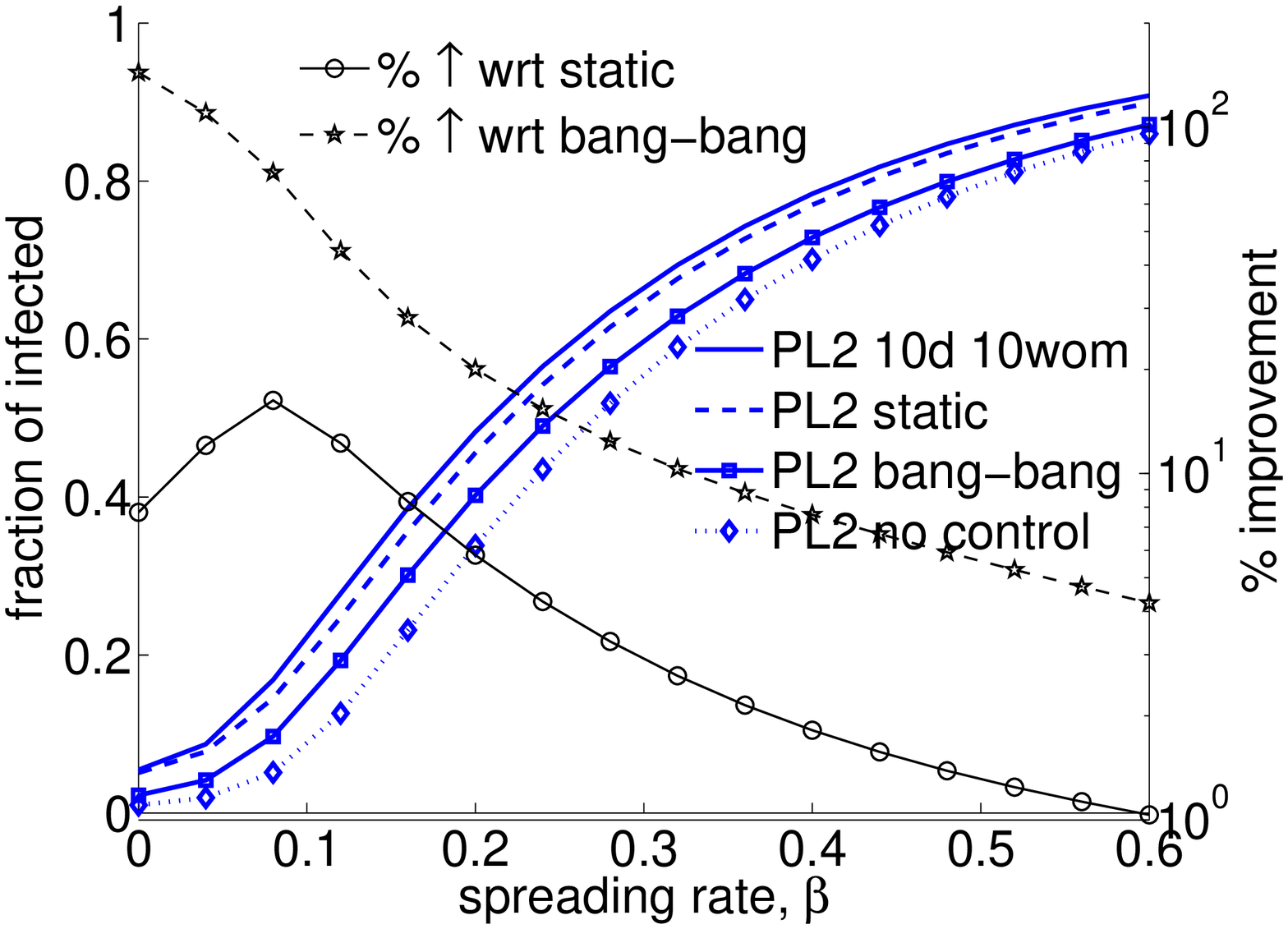} }
\caption{\small{$J$ vs. spreading rate $\beta$, and percentage improvement (right Y-axis) with respect to static and bang-bang control strategies. Parameter values: $T=1$, $i_0=0.01$, $\alpha=0.5$, $u_{max}=0.12$, $v_{max}=0.5$, $B=u_{max}^2T/8$, $\hat b_m=\hat c_m=1~\forall m$, $d = 0.5$, $M=10$.}}
\label{fig:J_vs_beta_d_wom}
\end{figure*}

Fig. \ref{fig:J_vs_beta_d_wom} shows the effect of varying the spreading rate $\beta$ on the reward functional $J$. The optimal control strategy gives notable improvement (say $\geq 10\%$) over the non optimal strategies only for a window of low and intermediate values of $\beta$ (exact window depends on the network used). When the spreading rate is too high, the static or bang-bang strategies will not perform too badly compared to the optimal control strategy.

The figures also reveal that the relative advantage of the optimal and static campaigning strategies over the no campaigning strategy decreases as the spreading rate increases. At higher spreading rates, we are able to reach a greater fraction of the population even without campaigning and hence the importance of campaigning (optimal or static/bang-bang) decreases.

Examining the right part of the three figures carefully, we see that at higher spreading rates, the fraction of population reached through optimal campaigning is more for the ER and PL3 networks than the PL2 network (in-spite of the fact that PL2 has heavier tail, \emph{i.e.}, more hubs with larger degrees). This also happens for the case of no campaigning (or no control case, see also Fig. \ref{fig:sim_vs_theory_SI_config_model}). This is because of the way the networks are chosen. From Fig. \ref{fig:degree_distribution}, PL3 has a minimum degree of 13 and PL2 has minimum degree of 6; about 60\% of the nodes in PL2 have degree of 13 or less (the minimum degree of PL3). We know that an epidemic spreads through the edges, so reaching the low degree nodes, which constitute a significant portion of PL2, is more difficult \emph{at higher levels of infection}. Although some low degree nodes are connected to hubs, but low degree nodes are large in number, and the rest of them which are away from hubs and connected to other low degree nodes, are penetrated slowly. However, this effect is not seen at low levels of infection (low values of $J$) because we mostly infect the nodes close to the hubs.

\subsection{Effect of the Cost of Influencing Various Target Groups}

\begin{figure*}[ht!]
\centering
\subfloat[ER \label{fig:J_vs_small_b_d_wom_ER}]{
\includegraphics[width=53mm]{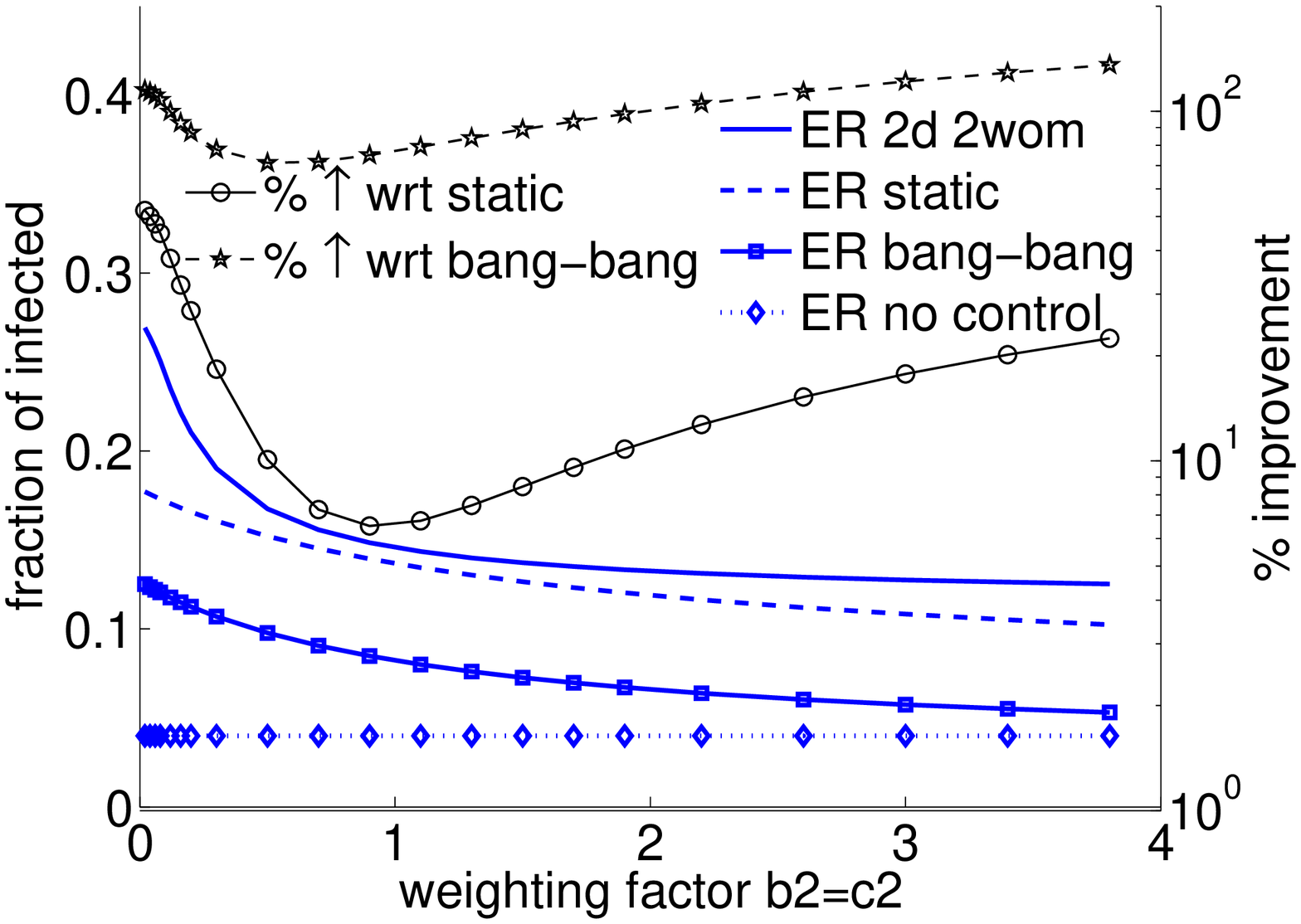} }
\hfill
\subfloat[PL3 \label{fig:J_vs_small_b_d_wom_PL3}]{
\includegraphics[width=53mm]{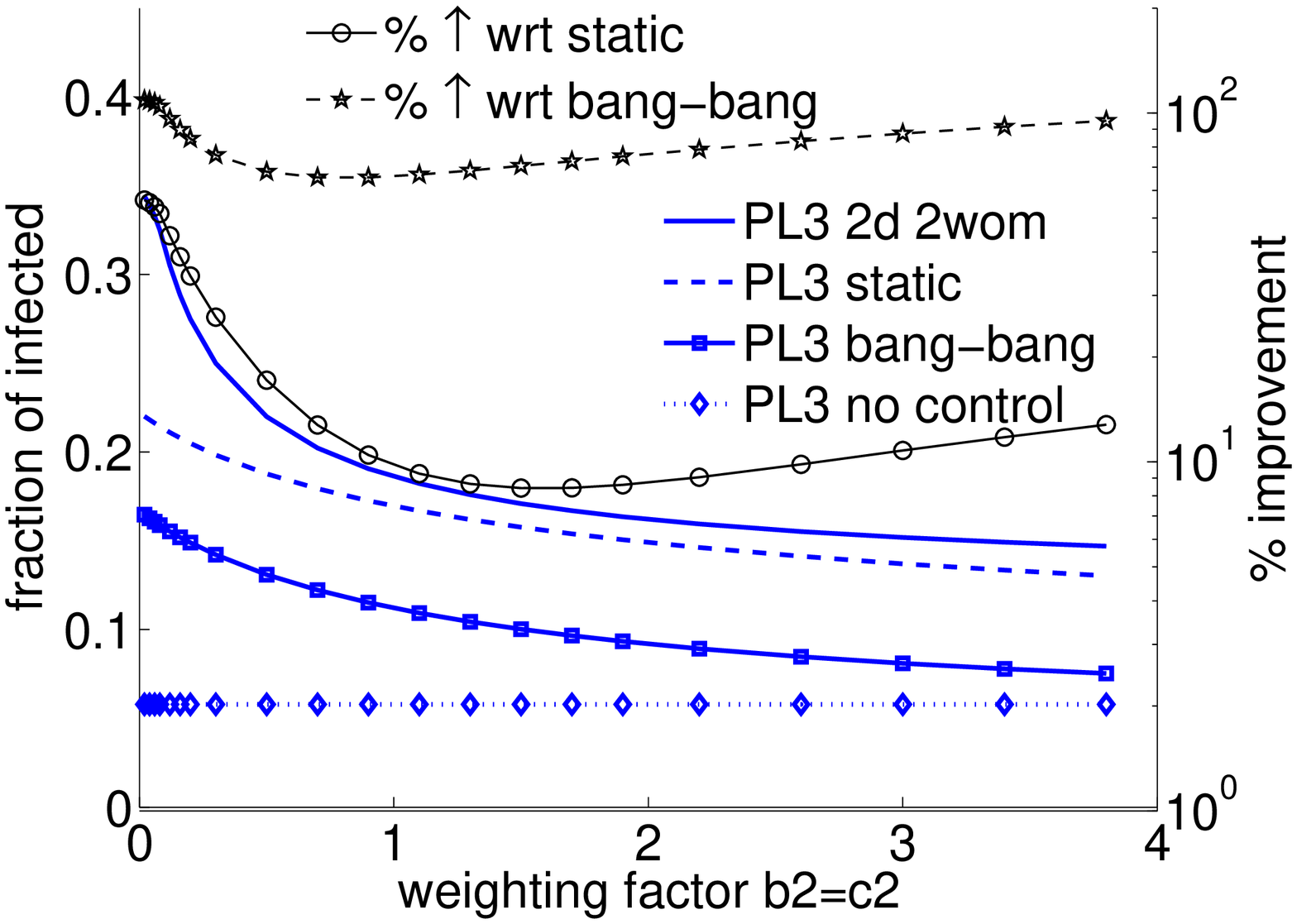} }
\hfill
\subfloat[PL2 \label{fig:J_vs_small_b_d_wom_PL2}]{
\includegraphics[width=53mm]{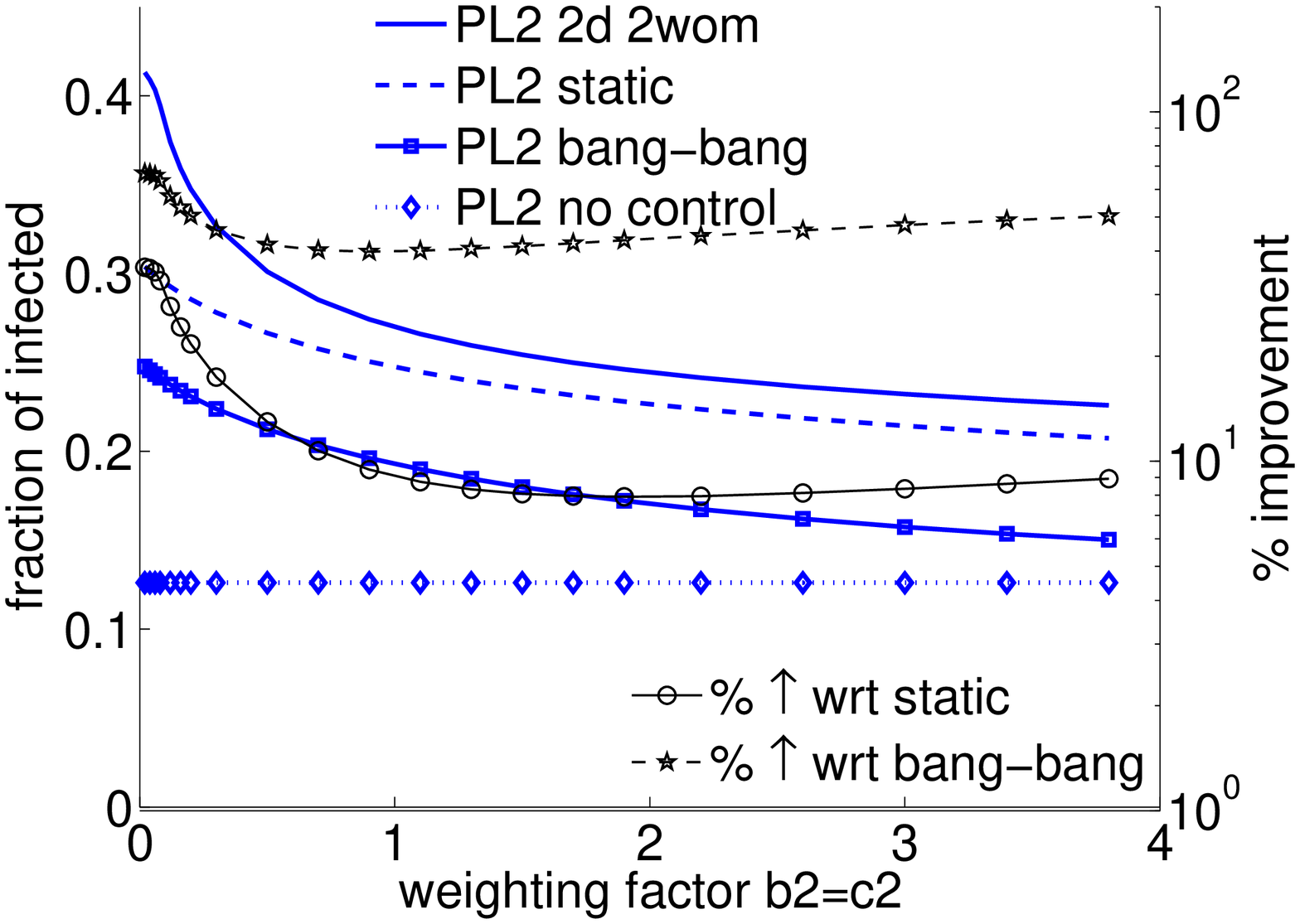} }
\caption{\small{$J$ vs. weighting factor $\hat b_2=\hat c_2$, and percentage improvement (right Y-axis) with respect to static and bang-bang control strategies. Parameter values: $M=2$, $T=1$, $i_0=0.01$, $\alpha=0.5$, $\beta=0.12$, $u_{max}=0.12$, $v_{max}=0.5$, $B=u_{max}^2T/8$, $\hat b_1=\hat c_1=1$, $d = 0.5$.}}
\label{fig:J_vs_small_b_d_wom}
\end{figure*}

We study the effect of changing the cost of influencing and incentivizing the target groups, for the case when the degree classes are divided into $M=2$ groups. We change this cost by changing the weighting parameters $\hat b_m$ and $\hat c_m$ in the functions $b_m()$ and $c_m()$ defined in Sec. \ref{sec:defalut_parameters}. The results are shown in Fig. \ref{fig:J_vs_small_b_d_wom}. The weighting parameters for the group containing lower degree classes, $\hat b_1$ and $\hat c_1$ are fixed to 1, and $\hat b_2=\hat c_2$ are varied. As the skewness in the costs increases, the advantage obtained by the optimal control strategy over the non-optimal strategies also increases. This is so because the optimal strategy starts allocating more resource to the cheaper group. Since high degree classes are more important for spreading in the case of PL2 and PL3 networks, the relative advantage of optimal strategies over non-optimal strategies in these networks, at high values of $\hat b_2=\hat c_2$ is less, in comparison to the ER network (where degrees of the nodes present in the network are distributed more homogeneously).

\subsection{Effect of the Number of Seeds at the Start of the Epidemic}

\begin{figure*}[ht!]
\centering
\subfloat[ER \label{fig:J_vs_i0_d_wom_ER}]{
\includegraphics[width=53mm]{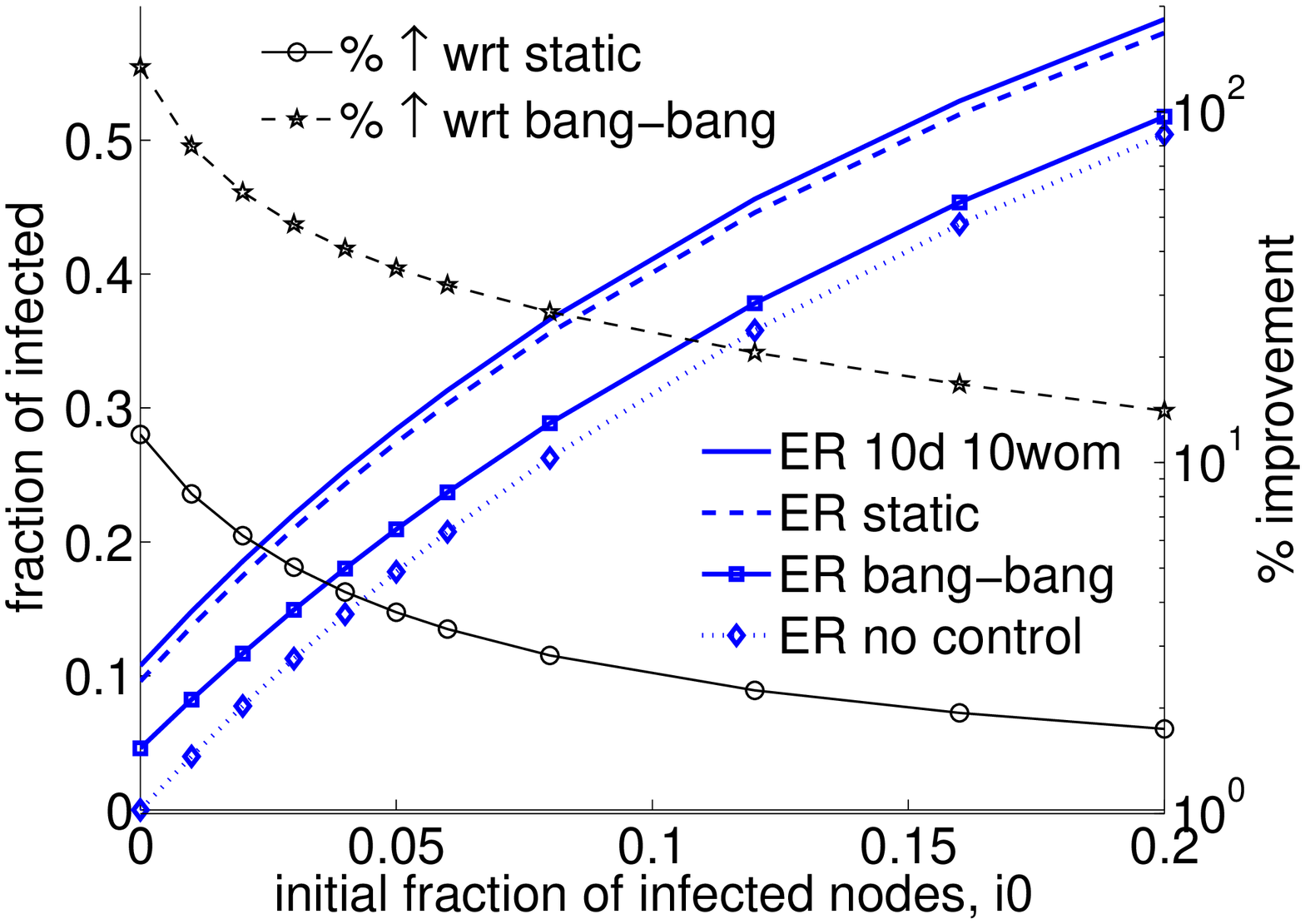} }
\hfill
\subfloat[PL3 \label{fig:J_vs_i0_d_wom_PL3}]{
\includegraphics[width=53mm]{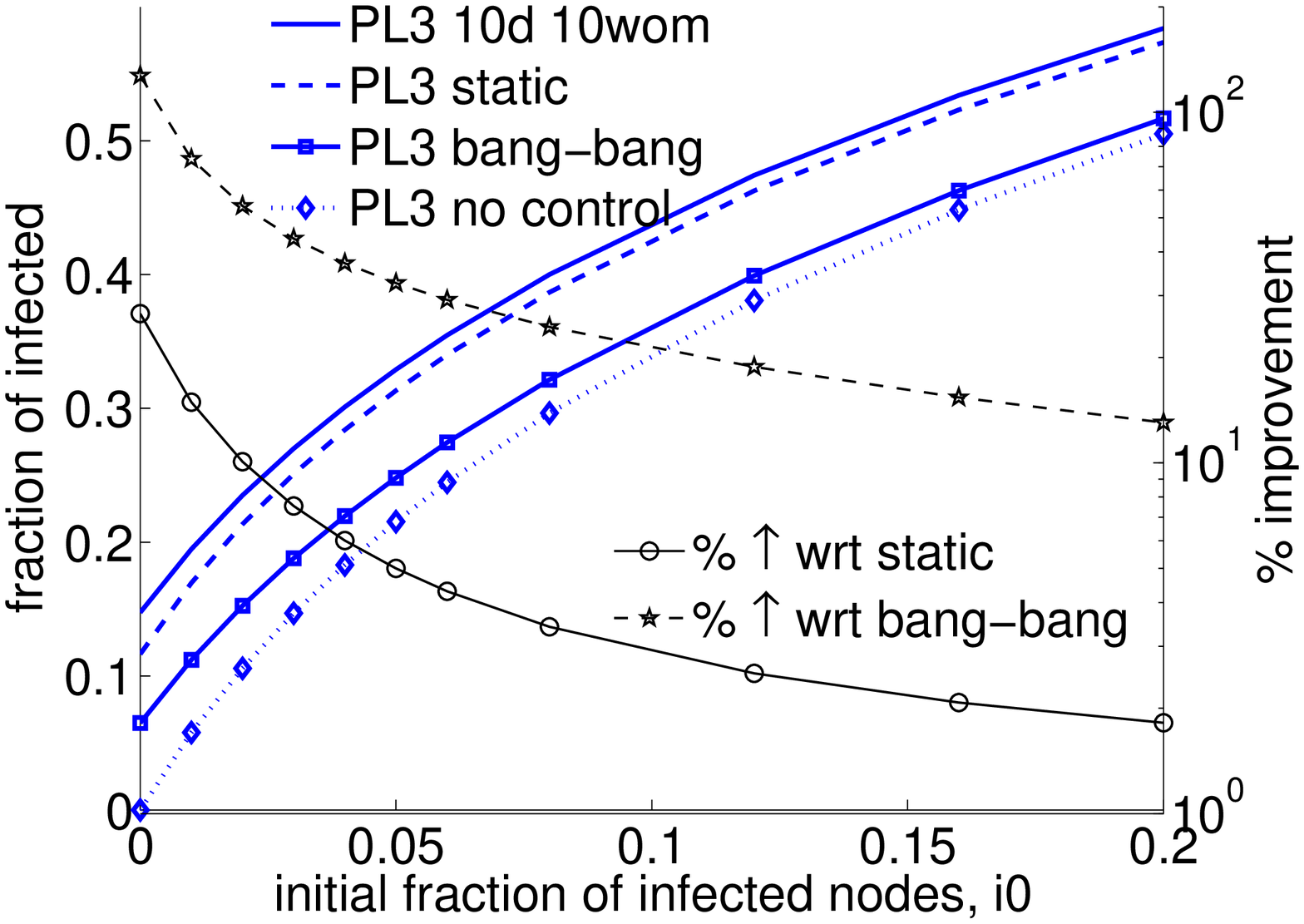} }
\hfill
\subfloat[PL2 \label{fig:J_vs_i0_d_wom_PL2}]{
\includegraphics[width=53mm]{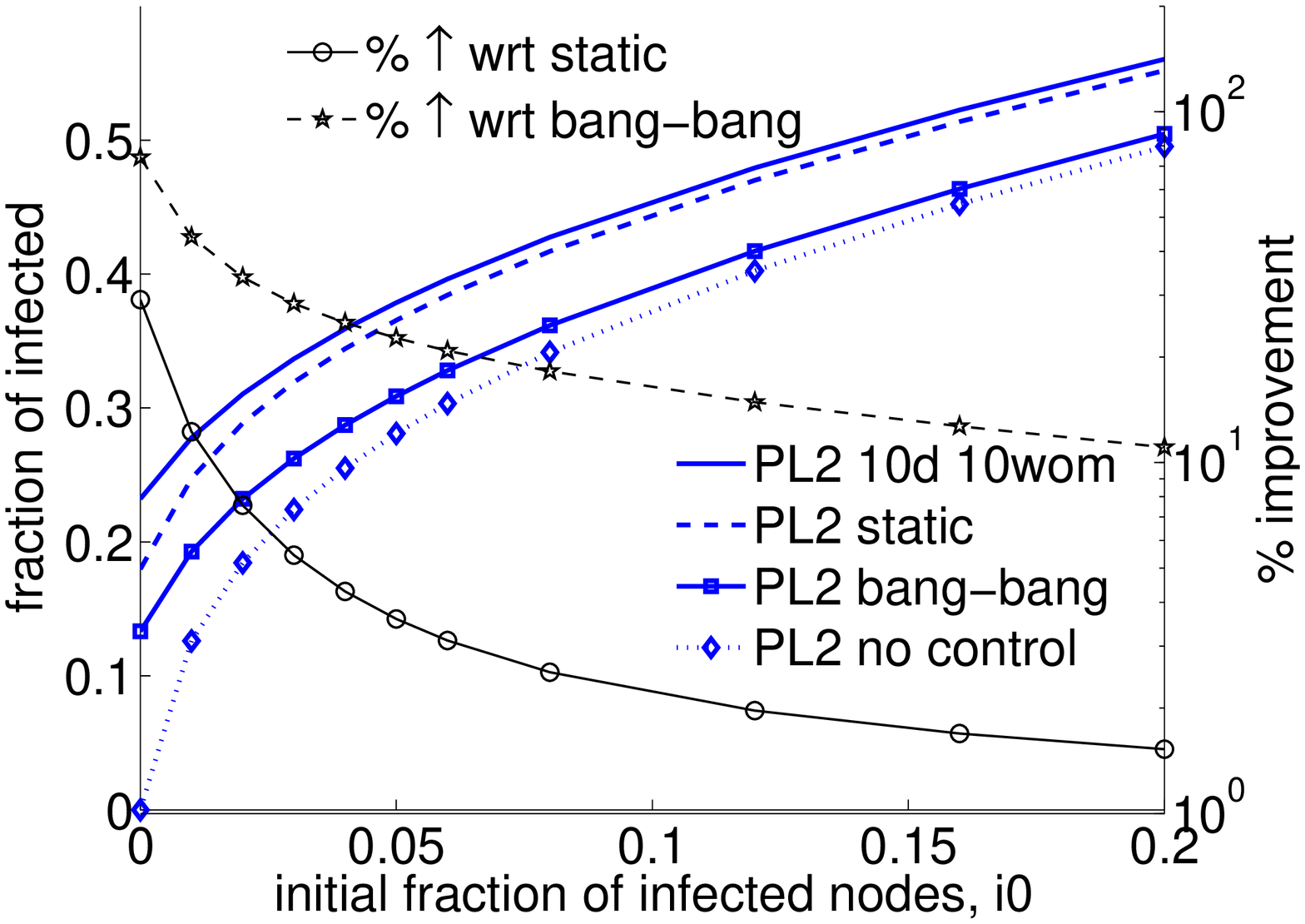} }
\caption{\small{$J$ vs. $i_0$ (seeds), and percentage improvement (right Y-axis) with respect to static and bang-bang control strategies. Parameter values: $T=1$, $\alpha=0.5$, $\beta=0.12$, $u_{max}=0.12$, $v_{max}=0.5$, $B=u_{max}^2T/8$, $\hat b_m=\hat c_m=1~\forall m$, $d = 0.5$, $M=10$.}}
\label{fig:J_vs_i0_d_wom}
\end{figure*}

Fig. \ref{fig:J_vs_i0_d_wom} studies the effect of the fraction of population already infected at the start of the epidemic, $i_0$ (seeds) on $J$. As $i_0$ increases, the improvement offered by the optimal strategy over the non-optimal strategies decreases quickly. Thus, one should only bother about calculating and implementing the optimal strategy if starting the campaign early. However, it should be noted that for most real world applications, this condition is satisfied most of the time; we become aware of some information as a result of the ongoing campaign.

\section{Limitations and Future Work}

We list some of the limitations and potential future directions in which the work presented in this paper can be extended:

(1) We have only used information from the degree distribution of the network to prescribe the optimal allocation of the resource. The effects of some of the other network properties like degree-degree correlations, community structure, clustering etc. on the optimal strategy still need to be explored. The degree based compartmental model may not be accurate for some of these situations, and more complicated methods, such as pair approximation methods \cite[Sec. 17.10.1]{newman2010networks}, 
may be required. However, pair approximation methods lead to many more differential equations compared to the degree based compartmental models. This will lead to significant computational challenges because the system is solved multiple times to compute (optimal) controls.

(2) Degree is only one measure of node centrality in the network. Exploring the effect of other centrality measures on campaigning resource allocation forms another interesting direction of future work.

\section{Conclusion}

We studied the problem of optimal resource allocation over time, for maximizing the spread of information among nodes (individuals) connected in a network, under fixed budget constraints. We used the degree based compartmental model for the Susceptible-Infected epidemics to model information propagation on networks. Nodes with the same degree are aggregated into the same degree class. The degree classes are divided into $M$ groups, each influenced by a direct and a word-of-mouth control, the two strategies to accelerate information spreading.

Our formulation leads to an optimal control problem the solution to which provides the best allocation of the resource over (i) the campaign duration, (ii) strategies, and (iii) the $M$ groups, to maximize the information spread. In the process, the solution also identifies the important groups which should be allotted more resources to maximize the epidemic size. The order of importance of the groups depends on the network.

We proved the existence of a solution to the optimal control problem with non-linear isoperimetric constraints using novel techniques. The optimal control problem was solved by converting it into a non-linear static optimization problem. Our formulation works for arbitrary degree distributions which allowed us to study the effects of network topology on epidemic control. We presented results for Erd\H os-R\'enyi and scale free networks (with power law exponents 2 and 3) and studied the effects of various model parameters on the optimal control strategy. We have quantified the improvement offered by the optimal control strategy over the static and bang-bang control strategies.

Our results show that for certain scenarios, the performance gains achieved by the optimal strategy are only marginal compared to the non-optimal strategies. Controls are stronger in early stages of the campaign. For the case when degree classes are divided into three groups, for scale free (heterogeneous) networks, the order of importance of the groups in message spreading is: (the importance is based on the resource allocated by the optimal strategy to the groups) high degrees $>$ medium degrees $>$ low degrees. In contrast, for the Erd\H os-R\'enyi (homogeneous) network, it is: medium degrees $>$ high degrees $>$ low degrees. Word-of-mouth strategy is more important for maximizing information diffusion in heterogeneous networks than homogeneous networks. Optimal campaigning leads to significant performance gains (over a static strategy) only if campaigning starts early, for topics which have low to intermediate spreading rates and for low to intermediate budget constraints.

The results presented in this paper may be of interest to (election and crowdfunding) campaign managers and product marketing managers working to disseminate a piece of information on a social network. The framework presented in this paper, with some modifications, can also be used for biological epidemic mitigation on networks.

\appendices
\section{Neighbor Degree Distribution}
\label{appendix:neighbor_degree_distribution}
Here we argue that in the configuration model, with degree distribution $p_k,~k\in \mathbb K$, neighbor degree distribution is given by, $r_k=kp_k/\bar k$, where $\bar k$ is mean degree of the network \cite[Sec. 13.3]{newman2010networks}.

When an edge is cut, it leads to two half edges, one adjoined to each node which the original edge is joined to. The total number of half edges in the network is given by $2m$ an even positive integer. In the configuration model, a half edge originating from any node is equally likely to terminate at any other node. If total number of nodes in the network is $N$, total number of nodes of degree $k$ is $p_kN$. The total number of half edges corresponding to this is $kp_kN$. Thus, the probability that one of them will be paired with the tagged half edge originating from the tagged node is $kp_kN/(2m-1)$ (we leave out the edge from the tagged node itself). In the limit of large number of nodes, this is $\approx kp_kN/2m$. Notice that $2m/N=\bar k$. Thus the probability that the tagged edge will terminate on a neighbor that has degree $k$ is $kp_k/\bar k$.


\ifCLASSOPTIONcaptionsoff
  \newpage
\fi


\small
\bibliographystyle{IEEEtran}
\bibliography{bibliography_database}

%
%
%
%
%
%
%

\end{document}